\theoremstyle{plain}
\newtheorem{theorem}{\bf Theorem}[section]
\newtheorem{lemma}[theorem]{\bf Lemma}
\newtheorem{proposition}[theorem]{\bf Proposition}
\newtheorem{corollary}[theorem]{\bf Corollary}
\theoremstyle{definition}
\newtheorem{definition}[theorem]{\bf Definition}
\newtheorem{example}[theorem]{\bf Example}
\newtheorem{remark}[theorem]{\bf Remark}
  \newcommand{\subsubsubsection}{\@startsection{paragraph}{4}{\z@}%
    {1.0\Cvs \@plus.5\Cdp \@minus.2\Cdp}%
    {.1\Cvs \@plus.3\Cdp}%
    {\reset@font\sffamily\normalsize}
  }
\title{Time operators for continuous-time and discrete-time quantum walks
}
\author{Daiju Funakawa
       \thanks{Department of Electronics and Information Engineering, Hokkai-Gakuen University,
 		Sapporo 062-8605, Japan,
 	    e-mail: funakawa@hgu.jp},
 	Yasumichi Matsuzawa
 	\thanks{Department of Mathematics, Faculty of Education, Shinshu University,
        6-Ro, Nishi-nagano, Nagano 380-8544, Japan,
        e-mail: myasu@shinshu-u.ac.jp},
 	Akito Suzuki
 	\thanks{Division of Mathematics and Physics, 
 		Faculty of Engineering, Shinshu University, Wakasato, Nagano 380-8553, Japan, 
 		e-mail: akito@shinshu-u.ac.jp},\\
 	Itaru Sasaki
 	\thanks{Department of Mathematics, Shinshu University, Matsumoto 390-8621, Japan,
        e-mail: isasaki@shinshu-u.ac.jp},
 	Noriaki Teranishi
 	\thanks{Department of Mathematics, Hokkaido University, Sapporo 060-0810, Japan,
        e-mail: teranishi@math.sci.hokudai.ac.jp}
}
\date{}
\begin{document}
\maketitle

\begin{abstract}
We construct concrete examples of time operators
for both continuous and discrete-time homogeneous quantum walks, and we determine their deficiency indices
and spectra. For a discrete-time quantum walk, the time operator can be self-adjoint if the time evolution operator has a non-zero winding number. 
In this case, its spectrum becomes a discrete set of real numbers.
\end{abstract}

\section{Introduction}
Quantum walks are universal computational primitives
\cite{Childs1, Childs2, Lovett} and efficient tools
for building quantum algorithms \cite{Amb}. 
They can also simulate quantum systems such as 
Dirac particles \cite{Arnault, MS, Molf}. 
It is hence important to know the dynamics of the quantum walk. 
In particular, many authors have been studied the long-time behavior 
in various ways \cite{Fillman, HKSS, Konno1, Konno2, MS4, SS, TateSunada, Su16}. 
Here we employ time operators to obtain time decay estimates of transition
probabilities between states, which will play important roles in implementing
quantum walk based algorithms and simulations. 

A time operator of a Hamiltonian $H$ is formally defined as a Hermitian operator $T$ which satisfies the canonical commutation relation with $H$, i.e., $TH-HT=i$.
It was widely believed for a long time that in quantum theory there exists no time operator, as Pauli pointed out in his famous textbook \cite[p. 63, footnote 2]{Pa80}.
On the other hand, Aharonov and Bohm constructed a concrete time operator of the one-dimensional free Hamiltonian \cite{AB61}.
This apparently contradicts to Pauli's claim.
The key to solve the contradiction is that Pauli's argument can not apply to Araronov and Bohm's time operator because it is quite formal.
This suggests that we must study time operators in a mathematically rigorous way.
It is important to pay enough attention to domains of time operators.
Such a study was initiated by Miyamoto \cite{Mi01}.
He expand his theory in a functional analytic context.
One of the most important result therein is that a time operator is related to ``time'' in the sense that 
the decay of the transition probability is estimated by the amount which is determined only from the time operator \cite[Theorem 4.1]{Mi01}.

In this paper, we construct time operators for concrete examples of quantum walks as a first step.
There are two kinds of quantum walks, one of which is called a {\it continuous-time quantum walk}, 
and the other is called a {\it discrete-time quantum walk}.

In the case of the continuous-time quantum walk,
the time evolution is described by the Hamiltonian $H$ of the system. 
Hence we can use Miyamoto's theory of time operators of self-adjoint operators. 
More precisely, we construct a strong time operator $T$ of $H$,
which is a symmetric operator satisfying the following operator equality 
\[ e^{itH} T e^{-itH} = T + t, \quad t \in \mathbb{R}. \]
We calculate the deficiency indices and the spectrum of $T$. 
As a result, the spectrum of the strong time operator $T$ becomes the set of complex numbers,
and $T$ does not have any self-adjoint extension that
is a strong time operator. 

On the contrary to the continuous case, 
the time evolution of a discrete-time quantum walk is described by a unitary operator.  
Hence, we need a notion of time operators of unitary operators.
Sambou and Tiedra de Aldecoa \cite{ST} recently proposed such a notion, and they called a Hermitian operator $T$ with the commutation relation $TU-UT=U$ on some domain (possibly not the operator equality) a time operator of a unitary operator $U$. 
In this paper, we say that a symmetric operator $T$ is a strong time operator of $U$
if $T$ satisfies the following operator equality 
\[ U^*TU = T + 1. \] 
We construct strong time operators for one-dimensional discrete-time quantum walks, and calculate their deficiency indices and spectra. 
We then observe that the strong time operator can be self-adjoint if the time evolution operator has a non-zero winding number. 
In this case, its spectrum becomes a discrete set of real numbers.

The paper is organized as follows.
In Section 2, we first recall time operators of self-adjoint operators.
Next we introduce definitions of time operators of unitary operators.
Their basic properties are investigated here.
In Section 3, examples of strong time operators for continuous-time quantum walks are given.
In Section 4, we construct examples of strong time operators for discrete-time quantum walks. 
We also determine their deficiency indices and spectra by passing to the energy representations.
It turns out that they have self-adjoint extensions, and hence such systems have self-adjoint time operators. 
\\

\noindent
\underline{\bf Definitions and notations}\\
Let $\mathcal{H}$ be a complex Hilbert space. 
We denote the inner product and the norm of $\mathcal{H}$ by $\langle\cdot, \cdot\rangle$ (complex linear in the second variable) and $\| \cdot \|,$ respectively.

For an operator $A$, we denote by $D(A)$ the domain of $A$, and by $\sigma(A)$ the spectrum of $A$. 
An operator $A$ on $\mathcal{H}$ is said to be {\it densely defined} if $D(A)$ is dense in $\mathcal{H}$.

Let $A, B$ be two operators.
We write $A=B$ if $D(A)=D(B)$ and $A\psi=B\psi$ for any $\psi\in D(A)$.
We say that $B$ is an {\it extension} of $A$ and write $A\subset B$ if $D(A)\subset D(B)$ and $A\psi=B\psi$ for any $\psi\in D(A)$.
Thus $A=B$ if and only if $A\subset B$ and $B\subset A$.

A densely defined operator $A$ is said to be {\it symmetric} if $A\subset A^*$.
In addition, if $A=A^*$ holds, then $A$ is said to be {\it self-adjoint}.
We denote by $\bar{A}$ the closure of a closable operator $A$.
Any symmetric operators are closable.
A symmetric operator $A$ is called {\it essentially self-adjoint} if $\bar{A}$ is self-adjoint.
      
For two operators $A, B$, we denote by $[A,B]$ the commutator of $A$ and $B$, 
i.e., $D([A,B]):=D(AB)\cap D(BA)$ and $[A,B]\psi:=AB\psi-BA\psi$ for $\psi\in D([A,B])$.

For the reader who is not familiar with functional analysis, we refer to the excellent textbook \cite{Sc12}.

Finally, we denote by $C_0^{\infty}(\Omega)$ the set of infinitely many differentiable functions
with compact support contained in $\Omega\subset\mathbb{R}$.

\section{Definitions of time operators}
In this section, we first recall the definitions of two classes of time operators of a self-adjoint operator.
Next, we introduce definitions of time operators of a unitary operator, as unitary analogs of time operators of a self-adjoint operator.
We also discuss some basic properties of them. 
 
 	\subsection{Time operators of a self-adjoint operator}
 	Let $H$ be a self-adjoint operator on $\mathcal{H}$. 
   
 	\begin{definition}
 	Let $T$ be a symmetric operator. 
    \begin{itemize}{}{}
    \item[(1)] We say that $T$ is a {\it time operator} of $H$, 
if there exists a subspace $\mathcal{D}\not =\{0\}$ of $\mathcal{H}$ such that 
$\mathcal{D}\subset D(TH)\cap D(HT)$, and $[T,H]=i$ holds on $\mathcal{D}$.
    \item[(2)] We say that $T$ is a {\it strong time operator} of $H$, 
if the operator equality
\[
e^{itH}Te^{-itH}=T+t
\] 
holds for any $t\in\mathbb{R}$.
    \end{itemize}
 	\end{definition}	
 	
The notion of strong time operator was introduced by Miyamoto \cite{Mi01} in the context of quantum theory.
Their spectra were investigated by Arai \cite{Ar07}.
A purely mathematical study of such pairs $(T,H)$ was done by J{\o}rgensen-Muhly \cite{JM} and Schm\"udgen \cite{Sc83}.
See also \cite{DD}.
Other mathematical studies on strong time operators can be found in 
\cite{Ar05, Ar08, Ar08b, Ar08c, Ar09, Ar10, AH, AM08a, BF, BFH, EM99, HKM09, HKS, RT, We90}.
The paper \cite{Ar08c} is a review of this subject.

The following remarks are fundamental.
In the sequel, we will discuss their analogs for time operators of a unitary operator.

 	\begin{remark}\label{remark sa}
 		Let $T$ be a strong time operator of $H$. 
Then the following hold.
\begin{itemize}{}{}
\item[(1)] Its closure $\bar{T}$ is a strong time operator of $H$ as well.
This follows from a simple limiting argument.
\item[(2)] If $T$ is closed, then $T$ is a time operator of $H$ with a dense subspace $\mathcal{D}=D(TH)\cap D(HT)$ \cite[Proposition 2.1]{Mi01}.
Note that the converse is not true, i.e., not every (closed) time operator of $H$ is a strong time operator of $H$.
We will see such examples in Section \ref{cqw}.
It is worth mentioning that Galapon \cite{Galapon} constructed a time operator of a self-adjoint operator with purely discrete spectrum.
Such a time operator is not a strong time operator 
because every self-adjoint operator admitting a strong time operator has no eigenvalues \cite[Corollary 4.3]{Mi01}.
See also \cite{Ar09, AM}.
\item[(3)] $\sigma (T)=\sigma (T+t)$ holds for any $t\in\mathbb{R}$.
In particular, $T$ is unbounded.
\item[(4)] If $T$ is essentially self-adjoint, then $\sigma(H)=\mathbb{R}$.
For a proof, see e.g., \cite[Theorem 3.2]{Ar08b}.
\end{itemize}
 	\end{remark}

\begin{remark}\label{Te}
For any non-scalar self-adjoint operator, its time operator does exist \cite[Theorem 2.2]{Te}.
\end{remark}

 	\subsection{Time operators of a unitary operator}
We are now ready to give definitions of time operators of a unitary operator $U$.
It will be done by replacing $e^{-itH}$ in the definition of strong time operators of a self-adjoint operator $H$ by $U$.

 	\begin{definition}
Let $U$ be a unitary operator on $\mathcal{H}$, and let $T$ be a symmetric operator. 
 		\begin{itemize}{}{}
    \item[(1)] We say that $T$ is a {\it time operator} of $U$, 
if there exists a subspace $\mathcal{D}\not =\{0\}$ of $\mathcal{H}$ such that 
$\mathcal{D}\subset D(TU)\cap D(T)$, and $[T,U]=U$ holds on $\mathcal{D}$.
    \item[(2)] We say that $T$ is a {\it strong time operator} of $U$, 
if the operator equality
\[
U^*TU=T+1
\] 
holds.
    \end{itemize}
 	\end{definition}

Let us compare fundamental properties of time operators of self-adjoint operators with those of time operators of unitary operators.
 	
 	\begin{remark}\label{equT}
Let $T$ be a strong time operator of a unitary operator $U$.
Then the following hold.
\begin{itemize}{}{}
\item[(1)] Its closure $\bar{T}$ is a strong time operator of $U$ as well.
This follows from a simple limiting argument.
\item[(2)] $T$ is a time operator of $U$ with a dense subspace $\mathcal{D}=D(T)$.
Note that the converse in not true, i.e., not every time operator of $U$ is a strong time operator of $U$ as we will see in Example \ref{hadamard}.
\item[(3)] $\sigma (T)=\sigma (T+1)$ holds.
In particular, $T$ is unbounded.
\end{itemize}
 	\end{remark}

 Let $\mathbb{T}$ be the set of complex numbers of modulus one,
 i.e., 
 $\mathbb{T} = \{ z\in \mathbb{C} \mid |z|=1 \}$.

 	\begin{proposition}\label{prS}
 		Let $T$ be a strong time operator of a unitary operator $U$. 
If $T$ is essentially self-adjoint, then $\sigma (U)=\mathbb{T}$.
 	\end{proposition}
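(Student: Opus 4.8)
The plan is to use essential self-adjointness to exponentiate $T$ and thereby convert the strong time operator equation into a Weyl-type intertwining relation between $U$ and the unitary group generated by $\bar T$; the rotation invariance of $\sigma(U)$ this produces then forces $\sigma(U)=\mathbb T$.

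First I would replace $T$ by its closure $\bar T$. By Remark \ref{equT}(1), $\bar T$ is again a strong time operator of $U$, so that $U^{*}\bar T U=\bar T+1$, and since $T$ is essentially self-adjoint, $\bar T$ is self-adjoint. I may therefore form the strongly continuous one-parameter unitary group $V(s):=e^{is\bar T}$, $s\in\mathbb R$.

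The heart of the argument is to establish the intertwining relation
\[
 V(s)\,U\,V(s)^{*}=e^{is}\,U,\qquad s\in\mathbb R.
\]
Here I would use that conjugation by the unitary $U$ commutes with the Borel functional calculus, giving $U^{*}e^{is\bar T}U=e^{is\,U^{*}\bar T U}$; inserting $U^{*}\bar T U=\bar T+1$ together with $e^{is(\bar T+1)}=e^{is}e^{is\bar T}$ yields $U^{*}V(s)U=e^{is}V(s)$, which rearranges to the displayed identity. This is precisely the step at which essential self-adjointness is indispensable: without it, $\bar T$ would not generate a unitary group and the purely algebraic relation $U^{*}\bar T U=\bar T+1$ could not be exponentiated. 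I expect the only genuine care to be needed in tracking the operator domains when invoking Stone's theorem and the uniqueness of the functional calculus.

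The conclusion is then a short spectral argument. Since $V(s)$ is unitary, conjugation by it preserves spectra, so the intertwining relation gives $\sigma(U)=\sigma\bigl(e^{is}U\bigr)=e^{is}\sigma(U)$ for every $s\in\mathbb R$. Hence $\sigma(U)$ is invariant under all rotations of the unit circle. As $U$ is unitary, $\sigma(U)$ is a nonempty closed subset of $\mathbb T$, and the only rotation-invariant such subset is $\mathbb T$ itself; therefore $\sigma(U)=\mathbb T$.
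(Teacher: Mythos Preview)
Your argument is correct and is essentially the same as the paper's: both pass to the self-adjoint closure $\bar T$, use $U^{*}e^{is\bar T}U=e^{is(\bar T+1)}=e^{is}e^{is\bar T}$ to obtain $e^{is\bar T}Ue^{-is\bar T}=e^{is}U$, and conclude from the resulting rotation invariance of $\sigma(U)$ that $\sigma(U)=\mathbb T$. The only difference is that you spell out in more detail why conjugation by $U$ commutes with the functional calculus and why rotation invariance forces the spectrum to be all of $\mathbb T$.
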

 	
 	\begin{proof}
 		We denote by $\bar{T}$ the closure of $T$. 
Then we can show that 
\[
U^*e^{it\bar{T}}U = e^{it(\bar{T}+1)}= e^{it}e^{it\bar{T}}
\]
for any $t\in\mathbb{R}$.
 		Thus, we obtain $e^{it\bar{T}}Ue^{-it\bar{T}}=e^{it}U$. 
This means that $\sigma (U)=\sigma (e^{it}U)$ for all $t\in \mathbb{R}$. 
Hence, $\sigma (U)=\mathbb{T}$ holds.
 	\end{proof}
 
The following is a toy example, but it makes a difference between time operators 
of self-adjoint operators and those of unitary operators stand out clearly.

\begin{example}\label{toy}
Let $\mathcal{H}=\ell^2(\mathbb{Z})$.
We define a unitary operator $L$ and a self-adjoint operator $X$ by
\[
D(L):=\mathcal{H},\ \ \ \ \ (L\psi)(n):=\psi(n+1),\ \ \ \ \ \psi\in\mathcal{H},\ \ n\in \mathbb{Z}
\]
and
\[
D(X) := \left\{\psi\in\mathcal{H} \ \Big|\ \sum_{n\in\mathbb{Z}}|n\psi(n)|^2<\infty\right\},\ \ \ 
(X\psi)(n):=n\psi(n),\ \ \  \psi\in D(X),\ \ n\in \mathbb{Z}.
\]
A direct calculation shows that $-X$ is a strong time operator of $L$. 
Since $X$ is self-adjoint, we have $\sigma (L)=\mathbb{T}$
by Proposition \ref{prS}.

Note that the spectrum of $X$ coincides with the set of integers, each element of which is an eigenvalue of $X$ as well.
In particular, $X$ does have an eigenvalue.
This phenomena is in contrast to the case of time operators of self-adjoint operators, 
because strong time operators of self-adjoint operators have no eigenvalues \cite[Corollary 4.2]{Mi01}.
\end{example}
 	
A strong time operator is related to the dynamics of a quantum system. 
If the time evolution is described by a unitary operator $U$, then the transition probability between states is bounded by the amount 
which is determined only from the time operator of $U$ and states.
More precisely, we have the following.

\begin{theorem}\label{decay}
Let $T$ be a strong time operator of a unitary operator $U$, and let $S$ be a bounded operator commuting with $U$.
Then for any $\psi\in D(T)$, $\phi\in D(T^*)$ and $t\in\mathbb{Z}\setminus\{0\}$,
	\begin{align*}
	|\langle \phi,U^t\psi\rangle |
\leq \frac{\|(T^*+S^*)\phi\|\|\psi\|+\|\phi\|\|(T+S)\psi\|}{|t|}
	\end{align*}
holds.
\end{theorem}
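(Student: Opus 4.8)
The plan is to extract the factor $t$ out of the inner product $\langle\phi,U^t\psi\rangle$ by iterating the defining relation $U^*TU=T+1$, so that dividing by $|t|$ produces the decay. Before any such manipulation I would settle the domain bookkeeping forced by the operator equality. Reading it as an equality of domains, $D(U^*TU)=D(T+1)=D(T)$, and since $U^*$ is everywhere defined this says precisely that $\psi\in D(T)$ if and only if $U\psi\in D(T)$. Because $U$ is a bijection on $\mathcal{H}$, this forces $U(D(T))=D(T)$, and hence $U^t\psi\in D(T)$ for every $\psi\in D(T)$ and every $t\in\mathbb{Z}$. This is the point that must be handled with care, as it legitimizes all of the pairings below.

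Next I would prove the iterated commutation relation
\[
TU^t\psi = U^t T\psi + t\, U^t\psi, \qquad \psi\in D(T),\ t\in\mathbb{Z},
\]
by induction. The case $t=1$ is just $U^*TU=T+1$ rewritten as $TU\psi=UT\psi+U\psi$; the inductive step applies this to $U^t\psi\in D(T)$, and the negative direction uses $UTU^*=T-1$, obtained by conjugating the defining relation. Rearranging gives $t\,U^t\psi=TU^t\psi-U^t T\psi$. Pairing with $\phi$ and using $\phi\in D(T^*)$ together with $U^t\psi\in D(T)$ to move $T$ onto $\phi$ as $T^*$, I obtain
\[
t\langle\phi,U^t\psi\rangle = \langle T^*\phi,U^t\psi\rangle - \langle\phi,U^t T\psi\rangle.
\]

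To bring in $S$, I would write $T^*=(T^*+S^*)-S^*$ and $T=(T+S)-S$. The residual $S$-contribution is $-\langle S^*\phi,U^t\psi\rangle+\langle\phi,U^t S\psi\rangle$, and I claim it vanishes. Since $S$ commutes with $U$, taking adjoints shows $S^*$ commutes with $U^*$ and hence with $U$, so both commute with $U^t$; therefore $\langle S^*\phi,U^t\psi\rangle=\langle\phi,SU^t\psi\rangle=\langle\phi,U^t S\psi\rangle$ and the two terms cancel. This leaves
\[
t\langle\phi,U^t\psi\rangle = \langle(T^*+S^*)\phi,U^t\psi\rangle - \langle\phi,U^t(T+S)\psi\rangle.
\]

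Finally, taking absolute values, applying the Cauchy--Schwarz inequality to each term, and using the unitarity $\|U^t\chi\|=\|\chi\|$ gives $|t|\,|\langle\phi,U^t\psi\rangle|\le \|(T^*+S^*)\phi\|\,\|\psi\|+\|\phi\|\,\|(T+S)\psi\|$; dividing by $|t|$ for $t\neq 0$ yields the stated bound. The main obstacle is really the domain verification of the first paragraph, since it underpins both the induction and the legitimacy of transferring $T$ to $T^*$; once the commutation identity and the cancellation of the $S$-terms are in place, the remainder is a single application of Cauchy--Schwarz.
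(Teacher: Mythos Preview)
Your proof is correct and follows essentially the same route as the paper's: both establish the iterated relation $U^t(T+S)=(T+S-t)U^t$ on $D(T)$, pair against $\phi\in D(T^*)$, and apply Cauchy--Schwarz. The only cosmetic difference is that the paper folds $S$ into the commutation identity from the outset (using $SU=US$), whereas you first derive $TU^t=U^tT+tU^t$ and then insert $S$ via a separate cancellation; your more explicit domain bookkeeping is a welcome addition.
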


\begin{proof}
The theorem is a unitary analog of \cite[Proposition 3.1]{Ar05}.
We follow the proof of it.
Since we have $U^t(T+S) =(T+S-t)U^{t}$ for any $t\in\mathbb{Z}$, we obtain
\begin{align*}
|\langle\phi ,U^t\psi  \rangle|
&=\left|\left\langle\phi ,\frac{(T+S)U^t-U^t(T+S)}{t}\psi\right\rangle\right|\\
&=\left|\frac{1}{t}\langle (T^*+S^*)\phi ,U^t\psi\rangle-\frac{1}{t}\langle U^{-t}\phi ,(T+S)\psi\rangle\right|\\
&\leq \frac{\|(T^*+S^*)\phi\|\|\psi\|+\|\phi\|\|(T+S)\psi\|}{|t|},
\end{align*}
which is the desired result.
\end{proof}

For an operator $A$ and a unit vector $\psi\in D(A)$, 
we define the {\it uncertainty} $(\Delta A)_{\psi}$ by
\[
(\Delta A)_{\psi} :=\|A\psi-\langle\psi,A\psi\rangle\psi\|.
\]
The following corollary estimates the survival probability at time $t$.

\begin{corollary}\label{bound by uncertainly}
Let $T$ be a strong time operator of a unitary operator $U$, and let $\psi\in D(T)$ be a unit vector.
Then we have
\[
|\langle\psi, U^t\psi\rangle|^2 \leq  \frac{4(\Delta T)_{\psi}^2}{t^2}
\]
for any $t\in\mathbb{Z}\setminus\{0\}$. 
\end{corollary}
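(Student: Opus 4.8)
The plan is to derive this directly from Theorem \ref{decay} by a judicious choice of the bounded operator $S$ and the test vector $\phi$. Since $\psi$ is a fixed unit vector in $D(T)$ and $T$ is symmetric, the number $\langle\psi, T\psi\rangle$ is real; I would therefore take $S := -\langle\psi, T\psi\rangle I$, the corresponding real scalar multiple of the identity. This $S$ is bounded and, being a scalar, trivially commutes with $U$, so the hypotheses of Theorem \ref{decay} are satisfied. For the first slot I would simply set $\phi := \psi$.

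The key observation is that, with these choices, both error terms in Theorem \ref{decay} collapse to the uncertainty. First, $\psi\in D(T)\subset D(T^*)$ because $T$ is symmetric ($T\subset T^*$), so $\phi=\psi$ is an admissible test vector and moreover $T^*\psi=T\psi$. Second, since $S$ is a real scalar we have $S^*=S$, and I compute
\[
(T+S)\psi = T\psi-\langle\psi,T\psi\rangle\psi,\qquad (T^*+S^*)\psi = T\psi-\langle\psi,T\psi\rangle\psi,
\]
whence $\|(T+S)\psi\|=\|(T^*+S^*)\psi\|=(\Delta T)_{\psi}$ by the very definition of the uncertainty.

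Substituting these identities together with $\|\psi\|=1$ into the bound of Theorem \ref{decay} yields $|\langle\psi, U^t\psi\rangle|\le 2(\Delta T)_{\psi}/|t|$ for every $t\in\mathbb{Z}\setminus\{0\}$, and squaring this gives the claimed estimate. There is essentially no serious obstacle here; the only points requiring care are confirming that shifting $T$ by the scalar $\langle\psi,T\psi\rangle$ reproduces exactly the subtraction appearing in $(\Delta T)_{\psi}$, and that the symmetry of $T$ legitimizes taking $\phi=\psi$ with $T^*\psi=T\psi$. Choosing $S$ to be this particular scalar is the natural device for converting the generic two-term bound of Theorem \ref{decay} into a symmetric expression in the uncertainty.
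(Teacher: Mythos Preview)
Your proof is correct and follows exactly the paper's approach: the paper's proof consists of the single sentence ``This follows from Theorem~\ref{decay} with $S=-\langle\psi,T\psi\rangle$ and $\phi=\psi$,'' and you have simply supplied the details behind that sentence. The checks you make---that $\langle\psi,T\psi\rangle$ is real, that $\psi\in D(T^*)$ with $T^*\psi=T\psi$, and that both error terms reduce to $(\Delta T)_\psi$---are precisely the right ones.
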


\begin{proof}
This follows from Theorem \ref{decay} with $S=-\langle\psi,T\psi\rangle$ and $\phi=\psi$.
\end{proof}

\begin{remark}
Corollary \ref{bound by uncertainly} is a unitary analog of \cite[Theorem 4.1]{Mi01}.
For the proof, we have followed the argument in \cite[Remark 8.3]{Ar05}.
\end{remark}

\begin{corollary}\label{no eigenvalues}
Let $U$ be a unitary operator admitting a strong time operator $T$.
Then $U$ has no eigenvalues.
\end{corollary}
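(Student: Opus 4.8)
The plan is to combine the survival-probability estimate of Corollary \ref{bound by uncertainly} with the spectral theorem for the unitary $U$, converting the decay of survival amplitudes into the statement that the spectral measure of every vector in $D(T)$ carries no point mass. Concretely, fix a unit vector $\psi\in D(T)$. Corollary \ref{bound by uncertainly} gives $|\langle\psi,U^t\psi\rangle|^2\le 4(\Delta T)_\psi^2/t^2$ for every $t\in\mathbb{Z}\setminus\{0\}$, so the survival amplitude $\langle\psi,U^t\psi\rangle$ tends to $0$ as $|t|\to\infty$, and in fact $\sum_{t\neq 0}|\langle\psi,U^t\psi\rangle|^2<\infty$.

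Next I would pass to the energy (spectral) representation. By the spectral theorem there is a Borel probability measure $\mu_\psi$ on $\mathbb{T}$ with $\langle\psi,U^t\psi\rangle=\int_{\mathbb{T}}z^t\,d\mu_\psi(z)$ for all $t\in\mathbb{Z}$; these are the Fourier--Stieltjes coefficients of $\mu_\psi$. Forming Ces\`aro means one has
\[
\frac{1}{2N+1}\sum_{t=-N}^{N}|\langle\psi,U^t\psi\rangle|^2
=\int_{\mathbb{T}}\int_{\mathbb{T}}\Big(\frac{1}{2N+1}\sum_{t=-N}^{N}(z\bar w)^t\Big)\,d\mu_\psi(z)\,d\mu_\psi(w).
\]
As $N\to\infty$ the left-hand side tends to $0$, because the full series $\sum_{t\neq0}|\langle\psi,U^t\psi\rangle|^2$ converges; meanwhile the bracketed kernel is bounded by $1$ and converges pointwise to the indicator of the diagonal $\{z=w\}$, so by dominated convergence the right-hand side tends to $\sum_{z\in\mathbb{T}}\mu_\psi(\{z\})^2$. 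Hence $\sum_{z}\mu_\psi(\{z\})^2=0$, i.e.\ $\mu_\psi$ is non-atomic. (This is Wiener's theorem: a measure on $\mathbb{T}$ whose Fourier coefficients vanish at infinity has no atoms.) I would then remove the dependence on $\psi$: writing $E_U$ for the spectral measure of $U$, non-atomicity means $\|E_U(\{\lambda\})\psi\|^2=\mu_\psi(\{\lambda\})=0$ for every $\lambda\in\mathbb{T}$ and every $\psi\in D(T)$; since $D(T)$ is dense and $E_U(\{\lambda\})$ is a bounded projection, this forces $E_U(\{\lambda\})=0$, so $\ker(U-\lambda)=\{0\}$ and $U$ has no eigenvalues.

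The point that I expect to require the most care is \emph{why one cannot argue directly}. A naive attempt applies Corollary \ref{bound by uncertainly} to a normalized eigenvector $e$ (where $|\langle e,U^t e\rangle|=1$ fails to decay), but this is illegitimate: an eigenvector of $U$ can never lie in $D(T)$, nor even in $D(T^*)$. Indeed, iterating $U^*TU=T+1$ gives $U^{\ast n}TU^{n}=T+n$ on $D(T)$, and for $Ue=\lambda e$ one computes $\langle e,TU^{n}\phi\rangle=\lambda^{n}\langle e,T\phi\rangle+n\lambda^{n}\langle e,\phi\rangle$ for $\phi\in D(T)$; choosing $\phi$ with $\langle e,\phi\rangle\neq0$ (possible by density) makes $|\langle e,TU^{n}\phi\rangle|$ grow linearly in $n$, so the functional $\phi\mapsto\langle e,T\phi\rangle$ is unbounded and $e\notin D(T^*)$. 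It is exactly this obstruction that the detour through the spectral measures of \emph{generic} vectors of $D(T)$, together with the measure-theoretic input of Wiener's theorem and a density argument, is designed to circumvent.
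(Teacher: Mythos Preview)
Your argument is correct, but the paper's proof is considerably more direct. The paper argues by contradiction and by \emph{approximation}: assuming $U\psi=\lambda\psi$ with $\|\psi\|=1$, it picks $\psi_n\in D(T)$ with $\psi_n\to\psi$, splits
\[
1=|\langle\psi,U^t\psi\rangle|\le \|\psi-\psi_n\|+\|\psi_n\|\,\|\psi-\psi_n\|+|\langle\psi_n,U^t\psi_n\rangle|,
\]
applies Corollary~\ref{bound by uncertainly} to the last term, lets $t\to\infty$ first and then $n\to\infty$, obtaining $1\le 0$. No spectral theorem, no Wiener kernel, no Ces\`aro means---just Cauchy--Schwarz and the triangle inequality.

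Your route reaches the same conclusion via the spectral measure: you show that $\mu_\psi$ is non-atomic for every $\psi\in D(T)$ (Wiener's theorem), and then upgrade to all of $\mathcal H$ by density and boundedness of $E_U(\{\lambda\})$. Both approaches ultimately rely on the density of $D(T)$ to work around the obstruction you correctly identify (an eigenvector cannot lie in $D(T)$, or even in $D(T^*)$); the paper simply handles that density at the level of vectors rather than measures. What your detour buys is a slightly stronger intermediate statement---every vector in $D(T)$ has purely continuous spectral type---which may be of independent interest; what the paper's argument buys is economy, since it avoids invoking the full spectral resolution of $U$.
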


\begin{proof}
The corollary is a unitary analog of \cite[Corollary 4.3]{Mi01}.
We follow the proof of it.
Assume that $U$ has an eigenvalue $\lambda$, and we shall derive a contradiction.
Let $\psi$ be a normalized eigenvector corresponding to $\lambda$.
Since $D(T)$ is dense, there exists a sequence $\psi_n\in D(T)$ such that $\psi_n\to\psi$ as $n\to\infty$.
Then by Corollary \ref{bound by uncertainly}, we have
\begin{align*}
1 &= |\langle\psi,U^t\psi\rangle| \leq \|\psi-\psi_n\| + \|\psi_n\|\|\psi-\psi_n\| + |\langle\psi_n,U^t\psi_n\rangle|\\
&\leq \|\psi-\psi_n\| + \|\psi_n\|\|\psi-\psi_n\| + \|\psi_n\|^2\cdot\frac{2(\Delta T)_{\psi_n/\|\psi_n\|}}{|t|}
\end{align*}
for any $n,t\in\mathbb{N}$.
After letting $t\to\infty$, we also let $n\to\infty$ and we get $1\leq 0$, which is a contradiction.
Hence the corollary follows.
\end{proof}

\begin{theorem}
Let $T$ be a strong time operator of a unitary operator $U$.
Then for any  $n\in\mathbb{N}$, $\psi\in D(T^n)$ and $\phi\in D\bigl((T^*)^n\bigr)$, there exists a constant $C_n(\phi,\psi)>0$ such that
	\begin{align*}
	|\langle \phi,U^t\psi\rangle |
\leq \frac{C_n(\phi,\psi)}{|t|^n},\ \ \ \ \ t\in\mathbb{Z}\setminus\{0\}
	\end{align*}
holds.
\end{theorem}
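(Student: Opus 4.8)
The plan is to prove the estimate by induction on $n$, using the first-order bound of Theorem \ref{decay} as the base case and the basic intertwining relation of a strong time operator to reduce order $n$ to order $n-1$.

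First I would record the operator identity that drives everything. Since $U^*TU=T+1$, for $\psi\in D(T)$ one has $U\psi\in D(T)$ and $TU\psi=U(T+1)\psi$; iterating gives $TU^t\psi=U^t(T+t)\psi$ for all $t\in\mathbb{Z}$, equivalently
\[
t\,U^t\psi=TU^t\psi-U^tT\psi,\qquad \psi\in D(T).
\]
The same computation shows that $U$, and hence $U^t$, maps $D(T)$ bijectively onto itself, so by iteration $U^t$ preserves each $D(T^n)$. This domain fact is the only structural ingredient I need, and checking it carefully is the one place that requires attention rather than being automatic.

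For the base case $n=1$ I would simply invoke Theorem \ref{decay} with $S=0$, giving $|\langle\phi,U^t\psi\rangle|\le C_1(\phi,\psi)/|t|$ with $C_1(\phi,\psi)=\|T^*\phi\|\,\|\psi\|+\|\phi\|\,\|T\psi\|$ for $\psi\in D(T)$ and $\phi\in D(T^*)$. For the inductive step, assume the bound at order $n-1$ for all admissible pairs of vectors. Given $\psi\in D(T^n)$ and $\phi\in D\bigl((T^*)^n\bigr)$, I pair the displayed identity with $\phi$ and move $T$ onto the left entry:
\[
t\,\langle\phi,U^t\psi\rangle=\langle T^*\phi,U^t\psi\rangle-\langle\phi,U^tT\psi\rangle .
\]
Here $T^*\phi\in D\bigl((T^*)^{n-1}\bigr)$ and $\psi\in D(T^{n-1})$ in the first term, while $\phi\in D\bigl((T^*)^{n-1}\bigr)$ and $T\psi\in D(T^{n-1})$ in the second; both terms are therefore governed by the induction hypothesis and are bounded by $C_{n-1}(T^*\phi,\psi)/|t|^{n-1}$ and $C_{n-1}(\phi,T\psi)/|t|^{n-1}$ respectively. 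Dividing by $|t|$ yields the claim with the recursively defined constant $C_n(\phi,\psi):=C_{n-1}(T^*\phi,\psi)+C_{n-1}(\phi,T\psi)$.

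I do not expect a genuine obstacle: the statement is an iterated version of Theorem \ref{decay}, and the argument is a clean induction. The only points demanding care are (i) the domain bookkeeping, ensuring $U^t\psi\in D(T)$ so that $\langle\phi,TU^t\psi\rangle=\langle T^*\phi,U^t\psi\rangle$ is legitimate, and that passing to $T^*\phi$ and $T\psi$ lands in $D\bigl((T^*)^{n-1}\bigr)$ and $D(T^{n-1})$; and (ii) observing that unrolling the recursion expresses $C_n(\phi,\psi)$ as a binomial combination of the norms $\|(T^*)^k\phi\|$ and $\|T^{n-k}\psi\|$ for $0\le k\le n$, so it is finite (and may be taken strictly positive, e.g. by replacing it with $\max\{C_n(\phi,\psi),1\}$).
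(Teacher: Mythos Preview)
Your proof is correct. Both you and the paper argue by induction on $n$ from Theorem \ref{decay}, but the mechanics differ: the paper expands $U^{-t}T^mU^t=(T+t)^m$ via the binomial theorem, isolates the $t^m$ term, and then invokes the \emph{strong} induction hypothesis on each of the cross terms $\langle\phi,U^tT^j\psi\rangle$ for $j=1,\dots,m-1$. Your argument instead peels off a single factor of $t$ at each step via $t\,\langle\phi,U^t\psi\rangle=\langle T^*\phi,U^t\psi\rangle-\langle\phi,U^tT\psi\rangle$ and applies only the order-$(n-1)$ hypothesis to the two resulting terms. Your route is a bit cleaner (simple rather than strong induction, and a tidy recursion $C_n(\phi,\psi)=C_{n-1}(T^*\phi,\psi)+C_{n-1}(\phi,T\psi)$), while the paper's expansion makes the dependence of $C_n$ on the full family $\{\|(T^*)^j\phi\|,\|T^j\psi\|\}$ explicit in one shot; either way the domain checks are exactly those you flagged.
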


\begin{proof}
The theorem is a unitary analog of \cite[Theorem 8.5]{Ar05}.
We follow the proof of it.
We prove the theorem by induction on $n$.
The case $n=1$ follows from Theorem \ref{decay}.
Assume that the theorem holds for any $n=1,\cdots,m-1$ with $m\geq 2$.
Take any $\psi\in D(T^m)$, $\phi\in D\bigl((T^*)^m\bigr)$ and $t\in\mathbb{Z}\setminus\{0\}$.
Since $U^{-t}TU^t=T+t$, we have 
\[
U^{-t}T^mU^t = (T+t)^m = \sum_{j=0}^{m}\binom{m}{j}T^jt^{m-j} = t^{m}+\sum_{j=1}^{m}\binom{m}{j}T^jt^{m-j},
\]
whence
\[
T^mU^t-U^{t}\sum_{j=1}^{m}\binom{m}{j}T^jt^{m-j} = U^tt^{m}
\]
holds.
Keeping in mind that $T^j\psi\in D(T^{m-j})$ for $j=1,\cdots,m-1$, we get
\begin{align*}
|t|^m|\langle\phi,U^t\psi\rangle| &= \left|\langle\phi,T^mU^t\psi\rangle-\left\langle\phi,U^{t}\sum_{j=1}^{m}\binom{m}{j}T^jt^{m-j}\psi\right\rangle\right|\\
&\leq \left\|\left(T^{*}\right)^m\phi\right\|\|\psi\| + \|\phi\|\|T^m\psi\| + \sum_{j=1}^{m-1}\binom{m}{j}|t|^{m-j}|\langle\phi,U^{t}T^j\psi\rangle|\\
&\leq \left\|\left(T^{*}\right)^m\phi\right\|\|\psi\| + \|\phi\|\|T^m\psi\| + \sum_{j=1}^{m-1}\binom{m}{j}\cdot C_{m-j}(\phi,T^{j}\psi).
\end{align*}
The right-hand side is independent of $t$, hence the theorem with $n=m$ follows.
This finishes the proof.
\end{proof}

We close this section with the following theorem, which guarantees the existence of a time operator.

 \begin{theorem}\label{th:te}
 	Let $U$ be a non-scalar unitary operator on $\mathcal{H}$. 
Then there exists a time operator $T$ of $U$.
 \end{theorem}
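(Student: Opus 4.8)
The plan is to build a time operator by hand, working directly from the defining relation $[T,U]=U$ restricted to a one-dimensional subspace $\mathcal{D}$, in the same spirit as the construction behind the self-adjoint analogue in Remark \ref{Te} (\cite[Theorem 2.2]{Te}). Concretely, I would look for a bounded symmetric $T$ and set $\mathcal{D}=\mathbb{C}\psi_0$ for a single unit vector $\psi_0$. Since $U$ is non-scalar, there is a $\psi_0$ with $U\psi_0\notin\mathbb{C}\psi_0$, so that $\psi_0$ and $U\psi_0$ are linearly independent. The relation $[T,U]\psi_0=U\psi_0$ is then equivalent to prescribing $T$ at the two vectors
\begin{equation*}
T\psi_0=\xi,\qquad T(U\psi_0)=U\psi_0+U\xi,
\end{equation*}
where $\xi\in\mathcal{H}$ is still at our disposal. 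Everything thus reduces to choosing $\xi$ so that this prescription extends to a symmetric operator on $\mathcal{H}$.

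For the second step I would invoke the elementary fact that a prescription $e_1\mapsto f_1,\ e_2\mapsto f_2$ on linearly independent vectors extends to a bounded symmetric operator on $\mathcal{H}$ if and only if the overlap matrix $\bigl[\langle e_i,f_j\rangle\bigr]_{i,j}$ is Hermitian (extend by an explicit finite-rank Hermitian piece and by $0$ on the orthogonal complement). Applying this with $e_1=\psi_0$, $e_2=U\psi_0$, $f_1=\xi$, $f_2=U\psi_0+U\xi$ and using $\langle U\psi_0,U\xi\rangle=\langle\psi_0,\xi\rangle$ together with $U^*U\psi_0=\psi_0$, the Hermiticity conditions collapse to
\begin{equation*}
\langle\psi_0,\xi\rangle\in\mathbb{R},\qquad
\langle U^*\psi_0,\xi\rangle-\langle\xi,U\psi_0\rangle=-\langle\psi_0,U\psi_0\rangle .
\end{equation*}
These are finitely many real-affine conditions on $\xi$, so the task is to show the $\mathbb{R}$-linear map $\xi\mapsto\bigl(\operatorname{Im}\langle\psi_0,\xi\rangle,\ \langle U^*\psi_0,\xi\rangle-\langle\xi,U\psi_0\rangle\bigr)$ is onto. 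A short computation shows that testing $\xi$ against $U^*\psi_0$ and $iU^*\psi_0$ already produces a $2\times2$ real Jacobian with determinant $1-|\langle\psi_0,U^2\psi_0\rangle|^2$; hence the system is solvable whenever $U^2\psi_0\notin\mathbb{C}\psi_0$, and the remaining freedom (directions orthogonal to $U^*\psi_0$ and $U\psi_0$) can be used to fix the real-part condition.

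The exceptional case is exactly when $U^2$ acts as a scalar, since then $|\langle\psi_0,U^2\psi_0\rangle|=1$ for every $\psi_0$ and the Jacobian degenerates. But $U^2=cI$ forces $\sigma(U)\subset\{w:w^2=c\}$, a two-point set, so a non-scalar $U$ then has two orthonormal eigenvectors $v_1,v_2$ with distinct eigenvalues $z_1\neq z_2$. For this case I would instead take $\mathcal{D}=\mathbb{C}(v_1+v_2)$ and let $T$ be the off-diagonal Hermitian operator on $\operatorname{span}\{v_1,v_2\}$ (zero on the complement) determined by $Tv_1=c\,v_2$ with $c(z_1-z_2)=z_2$; a direct check gives $[T,U](v_1+v_2)=U(v_1+v_2)$. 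Together the two cases cover every non-scalar unitary.

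I expect the main obstacle to be the bookkeeping in the generic case: verifying the symmetric-extension criterion and, above all, confirming that the real-affine system is solvable for a suitable choice of $\psi_0$ in every non-degenerate configuration (in particular ruling out hidden collinearities such as $\psi_0\in\operatorname{span}\{U^*\psi_0,U\psi_0\}$, which must be absorbed either by re-choosing $\psi_0$ or by enlarging the prescribed data to $\{U^*\psi_0,\psi_0,U\psi_0\}$). It is worth noting why a softer argument fails: in the cyclic model $\mathcal{H}\cong L^2(\mathbb{T},\mu)$ with $U$ multiplication by $e^{i\theta}$, the natural candidate is $T\sim -i\,d/d\theta$ plus a multiplication operator, which is symmetric only when $\mu$ is equivalent to Lebesgue measure (as in Example \ref{toy}); for a general spectral measure this differential operator is unavailable, which is precisely why the finite-data prescription above is the robust route.
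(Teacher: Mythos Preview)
Your exceptional case ($U^2$ scalar, hence two distinct eigenvalues available) is a special instance of the paper's Case~1, which treats any $U$ with at least two distinct eigenvalues by the same off-diagonal finite-rank construction. Your generic case, by contrast, is quite different from the paper's Case~2: the paper does not attempt a direct finite-rank prescription but instead chooses $\theta$ with $1-e^{i\theta}U$ injective (possible once $U$ has at most one eigenvalue), forms the self-adjoint Cayley transform $H$ of $e^{i\theta}U$, invokes \cite[Theorem~2.2]{Te} to obtain a time operator $T'$ of $H$, and pulls it back via $T=-\tfrac12(H-i)T'(H+i)$. This yields an unbounded $T$ but bypasses every solvability question you raise.

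Your generic argument has a genuine gap, not just bookkeeping. The step ``the remaining freedom (directions orthogonal to $U^*\psi_0$ and $U\psi_0$) can be used to fix the real-part condition'' requires $\psi_0\notin\operatorname{span}\{U^*\psi_0,U\psi_0\}$, and this fails outright in $\dim\mathcal{H}=2$: there $|\langle\psi_0,U^2\psi_0\rangle|<1$ forces $U^*\psi_0,U\psi_0$ to span all of $\mathcal{H}$. The cure is to change your case split. One checks that $\psi_0\in\operatorname{span}\{U^*\psi_0,U\psi_0\}$ makes either $\psi_0$ an eigenvector of $U^2$ or $\operatorname{span}\{\psi_0,U\psi_0\}$ a $U$-invariant plane; in either situation $U$ acquires an eigenvector, and in fact $\psi_0$ itself lands in the unique eigenspace $E_\lambda$ whenever $U$ has at most one eigenvalue $\lambda$. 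So if you split as the paper does --- Case~1: at least two distinct eigenvalues (your eigenvector construction); Case~2: at most one eigenvalue $\lambda$ --- then in Case~2 any $\psi_0\notin E_\lambda$ automatically satisfies $U\psi_0\notin\mathbb{C}\psi_0$, $|\langle\psi_0,U^2\psi_0\rangle|<1$, and $\psi_0\notin\operatorname{span}\{U^*\psi_0,U\psi_0\}$, so your three real-affine conditions on $\xi$ are indeed solvable. With that corrected split your route produces a bounded self-adjoint time operator and is a pleasant alternative to the Cayley-transform reduction.
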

 
 \begin{proof}
We split the proof into two cases.

{\bf Case 1.} $U$ has at least two distinct eigenvalues.\\
In this case, we modify the proof of \cite[Lemma 2.1]{Te} to construct a bounded self-adjoint time operator of $U$.
Let $\lambda$, $\mu$ be two distinct eigenvalues of $U$, and let $\xi$, $\eta$ be normalized eigenvectors corresponding to $\lambda$, $\mu$, respectively.
Note that $\xi$ and $\eta$ are mutually orthogonal.
Define a bounded self-adjoint operator $T$ on $\mathcal{H}$ by
\[
T\xi := \frac{i\mu}{\lambda-\mu}\eta,\ \ \ \ \ T\eta := \frac{i\lambda}{\lambda-\mu}\xi,\ \ \ \ \ T\psi := 0,\ \ \psi\in{\rm span}\{\xi,\eta\}^{\perp}.
\]
Let $\mathcal{D}$ be the one-dimensional subspace spanned by the vector $\xi+i\eta$.
We show that $T$ satisfies $[T,U]=U$ on $\mathcal{D}$.
By direct computations, we see that
\[
TU(\xi+i\eta) = T(\lambda\xi+i\mu\eta) = \frac{i\lambda\mu}{\lambda-\mu}\eta-\frac{\lambda\mu}{\lambda-\mu}\xi,
\]
and
\[
UT(\xi+i\eta) = U\left(\frac{i\mu}{\lambda-\mu}\eta+i\frac{i\lambda}{\lambda-\mu}\xi\right) = \frac{i\mu^2}{\lambda-\mu}\eta-\frac{\lambda^2}{\lambda-\mu}\xi,
\]
hence we get
\[
[T,U](\xi+i\eta) = i\mu\eta+\lambda\xi = U(\xi+i\eta).
\]
Therefore $T$ is a time operator of $U$.

{\bf Case 2.} $U$ has at most one eigenvalue.\\
In this case, we can choose a real number $\theta\in\mathbb{R}$ for which $1-e^{i\theta}U$ is injective. 
We define the self-adjoint operator $H$ by the Cayley transformation of $e^{i\theta}U$.
 	Note that we have
 	\begin{equation*}
 	U=e^{-i\theta}(H-i)(H+i)^{-1}.
 	\end{equation*}
 	We now use \cite[Theorem 2.2]{Te} to get a symmetric operator $T'$ of $H$ and a subspace $\mathcal{D}$ of $\mathcal{H}$ 
with $\mathcal{D}\subset D(T'H)\cap D(HT')$, $[T',H]=i$ on $\mathcal{D}$, and $\mathcal{D}\cap{\rm D}(HT'H)\neq\{0\}$. 
The last statement is not explicitly written in the theorem, but it follows from the proof of \cite[Theorem 2.2]{Te}.
Let 
\[
T:=-\frac{1}{2}(H-i)T'(H+i).
\] 
By the definitions of $T$ and $H$, 
we have $U(\mathcal{D}\cap{\rm D}(HT'H))\subset{\rm D}(T)$. 
In the sequel, we have to pay attention to domains of operators.
For all $\psi\in\mathcal{D}\cap{\rm D}(HT'H)$, we see that
 	\begin{align*}
 	[T,U]\psi &=e^{-i\theta}[T,e^{i\theta}U]\psi
= e^{-i\theta}[T,(H-i)(H+i)^{-1}]\psi\\
 	&=\frac{-e^{-i\theta}}{2}\Bigl((H-i)T'(H+i)(H-i)(H+i)^{-1}\\
&\ \ \ \ \ \ \ \ \ \ \ \ \ \ \ \ -(H-i)(H+i)^{-1}(H-i)T'(H+i)\Bigr)\psi\\
    &=\frac{-e^{-i\theta}}{2}\Bigl((H-i)T'(H-i)
 -(H-i)(H+i)^{-1}(H-i)T'(H+i)\Bigr)\psi\\
    &=\frac{-e^{-i\theta}}{2}(H-i)(H+i)^{-1}\Bigl((H+i)T'(H-i)-(H-i)T'(H+i)\Bigr)\psi\\
    &=\frac{-e^{-i\theta}}{2}(H-i)(H+i)^{-1}(-2iHT'+2iT'H)\psi\\
    &=-ie^{-i\theta}(H-i)(H+i)^{-1}[T',H]\psi
=e^{-i\theta}(H-i)(H+i)^{-1}\psi\\
 	&=U\psi.
 	\end{align*}
 	Hence $T$ is a time operator of $U$. 
This completes the proof.
 \end{proof}

\section{Time operators for continuous-time quantum walks\label{cqw}}

In this section, we consider a continuous-time quantum walk on the lattice $\mathbb{Z}^d$ with $d\in\mathbb{N}$.
For such a model, we first construct a strong time operator.
Then we compute their deficiency indices and spectra.

\subsection{Construction of time operators}\label{construction continuous}

The Hilbert space of the system is $\mathcal{H}_d:=\ell^2(\mathbb{Z}^d)$, and the Hamiltonian is defined by
\[
(H_d\psi)(x) := \frac{1}{2d}\sum_{|y-x|=1}\psi(y),\ \ \ \ \ \psi\in\mathcal{H}_d,\ \ x\in\mathbb{Z}^d.
\]
We shall find a strong time operator of $H_d$.
Let $\mathcal{K}_d:=L^2([0,2\pi]^d,dk/(2\pi)^d)$ be the Hilbert space of square-integrable functions on $[0,2\pi]^d$.
We define the Fourier transform $\mathscr{F}_d:\mathcal{H}_d\to\mathcal{K}_d$ as a unitary operator so that 
\[
(\mathscr{F}_d\psi)(k) := \sum_{x\in\mathbb{Z}^d}\psi(x)e^{-ix\cdot k}, \ \ \ \ \ k\in[0,2\pi]^d
\]
for $\psi\in\mathcal{H}_d$ with finite support.
From now on, the multiplication operator on $\mathcal{K}_d$ associated to a function $g:[0,2\pi]^d\to\mathbb{R}$ will be denoted by $g(k)$.
Then we can diagonalize $H_d$ as
\[
\hat{H}_d := \mathscr{F}_d H_d\mathscr{F}_d^{-1} = \frac{1}{d}\sum_{j=1}^d\cos{k_j},
\]
where $k=(k_1,\cdots,k_d)\in[0,2\pi]^d$.
In particular, $\sigma(H_d)=[-1,1]$ holds.
Thus, to construct a strong time operator of $H_d$, it is sufficient to find strong time operators of multiplication operators on $\mathcal{K}_d$.\\

We begin with the case $d=1$. 
Let $AC[0,2\pi]$ be the set of absolutely continuous functions on the interval $[0,2\pi]$.
Define an operator $P$ on $\mathcal{K}_1=L^2([0,2\pi],dk/2\pi)$ by $Pf:=-if'$ with domain 
\[
D(P) := \left\{f\in AC[0,2\pi] \mid f'\in \mathcal{K}_1,\ f(0)=f(2\pi)\right\}.
\]
It is known that $P$ is a self-adjoint operator.
We use the following theorem.

\begin{theorem}\label{lemma sa}
Let $g:\mathbb{R}\to\mathbb{R}$ be a twice continuously differentiable periodic function with period $2\pi$ whose derivative $g'$ has at most finitely many zeros in $[0,2\pi)$.
Then the operator 
\[
T := -\frac{1}{2}\left(\frac{1}{g'(k)}P+P\frac{1}{g'(k)}\right)
\]
on $\mathcal{K}_1$ with natural domain is a strong time operator of the self-adjoint operator $g(k)$.
\end{theorem}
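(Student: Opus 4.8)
The plan is to verify the defining operator equality $e^{itg(k)}Te^{-itg(k)}=T+t$ for the self-adjoint multiplication operator $g(k)$ directly, the whole difficulty being the singularities of $1/g'(k)$ at the finitely many zeros of $g'$ in $[0,2\pi)$. First I would pin down the natural domain as
\[
D(T)=\{\,f\in D(P)\mid \tfrac{1}{g'(k)}Pf\in\mathcal{K}_1,\ \tfrac{1}{g'(k)}f\in D(P)\,\},
\]
and check that $T$ is densely defined: every $C^\infty$ function that is periodic and vanishes in a neighbourhood of each zero of $g'$ lies in $D(T)$, and since there are only finitely many such zeros this class is dense in $\mathcal{K}_1$. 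Here the hypothesis $g\in C^2$ enters, guaranteeing that $1/g'(k)$ is $C^1$ off the zero set, so that $P\bigl(\tfrac{1}{g'(k)}f\bigr)$ makes classical sense on the support of such $f$.

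Next I would record the elementary intertwining relation for $P$ under the unitary multiplication operator $e^{-itg(k)}$. Since $g\in C^1$, the product rule gives $Pe^{-itg(k)}f=e^{-itg(k)}\bigl(Pf-tg'(k)f\bigr)$ for $f\in D(P)$, and the periodicity $g(0)=g(2\pi)$ preserves the boundary condition $f(0)=f(2\pi)$; hence $e^{-itg(k)}$ maps $D(P)$ onto itself and $e^{itg(k)}Pe^{-itg(k)}=P-tg'(k)$ holds on $D(P)$.

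Because $1/g'(k)$ and $e^{\pm itg(k)}$ are commuting multiplication operators, conjugating each summand of $T$ and inserting this relation yields, for $f\in D(T)$,
\[
e^{itg(k)}\tfrac{1}{g'(k)}Pe^{-itg(k)}f=\tfrac{1}{g'(k)}Pf-tf,\qquad e^{itg(k)}P\tfrac{1}{g'(k)}e^{-itg(k)}f=P\tfrac{1}{g'(k)}f-tf,
\]
so that averaging gives $e^{itg(k)}Te^{-itg(k)}f=Tf+tf$. The same commutativity shows that conjugation by $e^{-itg(k)}$ preserves each of the conditions defining $D(T)$, whence $e^{-itg(k)}D(T)=D(T)$; this upgrades the pointwise identity to the operator equality $e^{itg(k)}Te^{-itg(k)}=T+t$, which is the assertion. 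Finally I would confirm that $T$ is symmetric: for $f,h\in D(T)$, two uses of the self-adjointness of $P$ (the periodic boundary conditions absorbing the boundary terms) together with the reality of $1/g'(k)$ give $\langle Tf,h\rangle=\langle f,Th\rangle$.

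The step I expect to be the main obstacle is the control of the singular set of $1/g'(k)$: it dictates the precise domain $D(T)$, it is exactly what forces the finite-zeros hypothesis in the density argument, and it is what the $C^2$-regularity is needed to tame, so that the integrations by parts underlying both the symmetry and the intertwining relation produce no spurious contributions at the zeros of $g'$.
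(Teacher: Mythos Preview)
Your proposal is correct and follows essentially the same approach as the paper: density of $D(T)$ via smooth functions supported away from the finitely many zeros of $g'$, invariance of $D(P)$ (hence $D(T)$) under $e^{\pm itg(k)}$ using periodicity, and then the direct commutator computation $e^{itg(k)}Pe^{-itg(k)}=P-tg'(k)$ applied to each summand of $T$. The paper phrases the last step as computing $Te^{itg(k)}f$ explicitly and then obtaining the reverse inclusion by replacing $t$ with $-t$, while you argue domain equality $e^{-itg(k)}D(T)=D(T)$ directly, but these are equivalent reorderings of the same calculation.
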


\begin{proof}
We first show that $T$ is a symmetric operator.
Let $Z$ be the set of zeros of $g'$ in $(0,2\pi)$, which is a finite set by assumption.
Let $\Omega:=(0,2\pi)\setminus Z$.
Then $C_0^{\infty}(\Omega)$ is dense in $\mathcal{K}_1$, thus the domain of $T$ is dense.
On the other hand, we have
\[
T^* \supset -\frac{1}{2}\left(P\frac{1}{g'(k)}+\frac{1}{g'(k)}P\right) = T,
\] 
which means that $T$ is symmetric.

Next we show that $T$ is a strong time operator of $g(k)$.
Let $t\in\mathbb{R}$ be arbitrary.
Since $D(P)$ is invariant under $e^{itg(k)}$, so is $D(T)$.
Thus for any $f\in D(T)$, we have $e^{itg(k)}f\in D(T)$, and we see that
\[
e^{itg(k)}Tf = e^{itg(k)}\cdot\frac{i}{2}\left\{\frac{1}{g'(k)}f'+\frac{d}{dk}\left(\frac{1}{g'(k)}f\right)\right\}
\]
and
\begin{align*}
Te^{itg(k)}f &= \frac{i}{2}\left\{ite^{itg(k)} f+\frac{1}{g'(k)}e^{itg(k)}f'
+ite^{itg(k)} f+e^{itg(k)}\cdot\frac{d}{dk}\left(\frac{1}{g'(k)}f\right)\right\}\\
&= e^{itg(k)}\cdot\frac{i}{2}\left\{\frac{1}{g'(k)}f'+\frac{d}{dk}\left(\frac{1}{g'(k)}f\right)\right\} -te^{itg(k)}f\\
&= e^{itg(k)}Tf -te^{itg(k)}f,
\end{align*}
whence we get
\[
e^{itg(k)}Tf = (T+t)e^{itg(k)}f.
\]
Summing up the above arguments, we obtain the inclusion $e^{itg(k)}T\subset  (T+t)e^{itg(k)}$.
The converse inclusion immediately follows by replacing $t$ by $-t$.
This finishes the proof.
\end{proof}

\begin{remark}
Theorem \ref{lemma sa} is an analog of \cite[Theorem 1.9]{HKM09}.
See also \cite[Theorem 2.4]{AM08a}.
\end{remark}

By Theorem \ref{lemma sa}, we get a strong time operator
\[
\hat{T}_1:=\frac{1}{2}\left(\frac{1}{\sin{k}}P+P\frac{1}{\sin{k}}\right)
\]
of $\hat{H}_1=\cos{k}$.
To compute the inverse Fourier transform of $\hat{T}_1$, let $X, L$ be the operators on $\mathcal{H}_1$ defined in Example \ref{toy}.
Direct calculations show that we have $\mathscr{F}_1^{-1}P\mathscr{F}_1 = -X$ and $\mathscr{F}_1^{-1}e^{ik}\mathscr{F}_1 = L$.
Hence we conclude that
\[
T_1 := \mathscr{F}_1^{-1}\hat{T}_1\mathscr{F}_1 = -\frac{1}{2}\Bigl({\rm Im}(L)^{-1}X+X\,{\rm Im}(L)^{-1}\Bigr)
\]
is a strong time operator of $H_1={\rm Re}(L)$, where
\[
{\rm Re}(L) := \frac{L+L^*}{2},\ \ \ \ \ {\rm Im}(L) := \frac{L-L^*}{2i}.
\]

\subsection{Deficiency indices and spectra of time operators}

We next compute the deficiency indices and the spectrum of $T_1$.
Recall that the {\it deficiency indices} of a symmetric operator $A$ are defined by
\begin{align*}
d_+(A)&:=\dim\ker{(A^*-i)} = \dim\ker{(A^*-\lambda)},\ \ \ \ \  {\rm Im}(\lambda)>0,\\
d_-(A)&:=\dim\ker{(A^*+i)} = \dim\ker{(A^*-\lambda)},\ \ \ \ \  {\rm Im}(\lambda)<0.
\end{align*}

The following example is fundamental.

\begin{example}\label{reduced ex}
Let $I:=[a,b]$ be a compact interval, and let $dE$ be the Lebesgue measure on $I$.
We define an operator $P_I$ on $L^2(I,dE)$ by
\[
D(P_I) := \{f\in AC[a,b] \mid f'\in L^2(I,dE),\ f(a)=f(b)=0\}, \ \ \ \ \ P_If:=-if'.
\]
Then a direct computation shows that $-P_I$ is a strong time operator of the multiplication operator by the variable $E\in I$.
It is known that the subspace ${\rm ker}(-P_I^*\mp i)$ is the one-dimensional subspace spanned by $e^{\pm E}$.
Hence, $d_{\pm}(-P_I)=1$ holds. 
For the details, see e.g., \cite[Example 3.2]{Sc12}.
This, in particular, implies that $-P_I\mp z$ is not surjective for any $z\in\mathbb{C}\setminus\mathbb{R}$, whence $\sigma(-P_I)=\mathbb{C}$.
\end{example}

We reduce calculations of the deficiency indices and the spectrum of $\hat{T}_1$ to those of the above example, by passing to the energy representation.
Egusquiza and Muga  \cite{EM99} first used the energy representation to determine the deficiency indices of Aharonov and Bohm's time operator.
It is in fact a structure theorem for those time operators defined in Theorem \ref{lemma sa}.

\begin{theorem}\label{deficiency indices sa}
Let $g$ and $T$ be as stated in Theorem \ref{lemma sa}.
Then the following statements hold.
\begin{itemize}{}{}
\item[(1)] There exists a unitary operator $V:\mathcal{K}_1\to\oplus_{j=1}^nL^2\left(g(\bar{I}_j),dE\right)$ such that
\[
V\bar{T}V^{-1}
 = \bigoplus_{j=1}^n(-P_j),
\ \ \ \ \ Vg(k)V^{-1} = \bigoplus_{j=1}^n E,
\]
where  $0\leq a_1<a_2<\cdots<a_n<2\pi$ are the zeros of $g'$ in $[0,2\pi)$, $a_{n+1}:=2\pi+a_1$, and
$I_j:=(a_j,a_{j+1})$ is an open interval
, and $P_j$ is the operator $P_I$ defined in Example \ref{reduced ex} 
with $I$ replaced by the closed interval $g(\bar{I}_j)$ 
for $j=1,\cdots,n$.

\item[(2)] The deficiency indices of $T$ are equal to the number of zeros of $g'$ in $[0,2\pi)$.
In particular, $T$ has a self-adjoint extension.
\end{itemize}
\end{theorem}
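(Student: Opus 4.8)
The plan is to trivialize $T$ by the change of variables $E=g(k)$ carried out separately on each interval $I_j=(a_j,a_{j+1})$, on which $g$ is strictly monotone because $g'$ has no zeros in the interior of $I_j$. Since the zeros $a_1,\dots,a_n$ form a null set, $\mathcal{K}_1$ splits as $\bigoplus_{j=1}^n L^2(I_j,dk/2\pi)$, and each $g(\bar I_j)$ is the compact interval with endpoints $g(a_j),g(a_{j+1})$. On the $j$-th block I define $V_j:L^2(I_j,dk/2\pi)\to L^2(g(\bar I_j),dE)$ by
\[
(V_jf)(E):=\frac{1}{\sqrt{2\pi\,|g'(g^{-1}(E))|}}\,f(g^{-1}(E)),
\]
the weight being exactly the one that absorbs the Jacobian $dk=dE/|g'|$ and makes $V_j$ unitary. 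Setting $V:=\bigoplus_j V_j$ produces the required unitary, and $Vg(k)V^{-1}=\bigoplus_j E$ is immediate, since multiplication by $g(k)$ becomes multiplication by $E$.

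The heart of the argument is the computation $VTV^{-1}=\bigoplus_j(-P_j)$, where $-P_j$ acts as $i\,d/dE$. Writing $T$ in the unsymmetrized form $Tf=\frac{i}{g'}f'-\frac{i g''}{2(g')^2}f$ and conjugating by the weight $|g'|^{-1/2}$, I expect the first-order part to yield exactly $i\,d/dE$ while the zeroth-order terms produced by differentiating the weight cancel the $g''$-term. This cancellation is the whole point of the symmetric combination $\frac12\bigl(\frac{1}{g'}P+P\frac{1}{g'}\bigr)$, and it is insensitive to the sign of $g'$ on $I_j$, so the orientation of the change of variables does not matter. Thus on functions smooth and compactly supported in $\Omega$, the operator $T$ is intertwined with $\bigoplus_j(-P_j)$.

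The main obstacle is matching the domains at the critical endpoints $g(a_j)$, that is, showing that $V\bar TV^{-1}$ is precisely $\bigoplus_j(-P_j)$ with the Dirichlet conditions of Example \ref{reduced ex}, and not some larger symmetric operator. The inclusion $\bigoplus_j(-P_j)\subseteq V\bar TV^{-1}$ follows from the intertwining on $C_0^\infty(\Omega)$ together with the fact that the closure of $i\,d/dE$ on functions compactly supported in an open interval is the minimal (Dirichlet) operator $-P_j$. For the reverse inclusion I must analyze the natural domain of $T$ near a zero $a_j$: there $g'(k)\sim g''(a_j)(k-a_j)$, so $1/g'$ has a simple pole, and requiring $Tf\in\mathcal{K}_1$ forces $f$ to vanish at $a_j$ at a definite rate; after conjugation by $V_j$, whose weight behaves like $(k-a_j)^{-1/2}$, this becomes exactly the boundary condition $h(g(a_j))=0$. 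Equivalently, the task is to prove that $C_0^\infty(\Omega)$ is a core for $\bar T$, i.e. $\bar T=\overline{T|_{C_0^\infty(\Omega)}}$; the delicate point is that the singular weight $1/g'$ is what prevents the natural domain from containing functions violating the Dirichlet condition.

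Granting part (1), part (2) is short. By Example \ref{reduced ex} each $-P_j$ has deficiency indices $d_\pm(-P_j)=1$, and since deficiency indices are additive over orthogonal direct sums and invariant under unitary conjugation, $d_\pm(T)=d_\pm(\bar T)=n$, the number of zeros of $g'$ in $[0,2\pi)$. Because $d_+(T)=d_-(T)$, von Neumann's theorem guarantees that $T$ admits a self-adjoint extension, which completes the proof.
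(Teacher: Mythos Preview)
Your approach is exactly the paper's: the same unitary $V=\bigoplus_j V_j$ with the same Jacobian weight, the same immediate identification $Vg(k)V^{-1}=\bigoplus_j E$, and the same two-step domain matching (intertwining on $C_0^\infty$ for one inclusion, boundary analysis for the other). The one place where your sketch genuinely diverges is the argument for the inclusion $V\bar TV^{-1}\subseteq\bigoplus_j(-P_j)$, i.e.\ for the Dirichlet condition at the endpoints.

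You invoke the expansion $g'(k)\sim g''(a_j)(k-a_j)$ and argue from the resulting simple pole of $1/g'$. But the hypotheses of Theorem~\ref{lemma sa} only require $g\in C^2$ with finitely many critical points; nothing forbids $g''(a_j)=0$, and in that case your heuristic about ``a definite rate'' of vanishing and the behaviour of the weight ``like $(k-a_j)^{-1/2}$'' breaks down. The paper avoids this entirely by reading the boundary condition straight off the natural domain: if $f\in D(T)$, then in particular $f/g'\in D(P)\subset AC[0,2\pi]$, so $f/g'$ is continuous (hence bounded) on the whole interval. Writing
\[
(V_jf)(E)=\frac{f\bigl(g_j^{-1}(E)\bigr)}{\bigl|g'\bigl(g_j^{-1}(E)\bigr)\bigr|}\cdot\sqrt{\frac{\bigl|g'\bigl(g_j^{-1}(E)\bigr)\bigr|}{2\pi}},
\]
the first factor stays bounded while the second tends to $0$ as $E\to g(a_j)$ or $g(a_{j+1})$, yielding the Dirichlet condition with no assumption whatsoever on the order of the zero. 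This gives $VTV^{-1}\subset\bigoplus_j(-P_j)$ directly on the natural domain; your core reformulation ($C_0^\infty(\Omega)$ a core for $\bar T$) then becomes a consequence rather than something to be proved separately. Part~(2) is handled exactly as you say.
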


Before proving Theorem \ref{deficiency indices sa}, 
we state the following corollary. 

\begin{corollary}\label{spectrum sa}
Let $g$ and $T$ be as stated in Theorem \ref{lemma sa}. 
Then $\sigma(T) = \mathbb{C}$. 
\end{corollary}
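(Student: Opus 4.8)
The plan is to read everything off the structure theorem, Theorem \ref{deficiency indices sa}, so that no new analysis is required. First I would invoke part (1) of that theorem to produce the unitary $V$ with $V\bar{T}V^{-1} = \bigoplus_{j=1}^n(-P_j)$. Since the spectrum is invariant under unitary conjugation, this immediately gives $\sigma(\bar{T}) = \sigma\bigl(\bigoplus_{j=1}^n(-P_j)\bigr)$, and for a finite orthogonal direct sum of closed operators the spectrum is simply the union $\bigcup_{j=1}^n\sigma(-P_j)$.

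Next I would feed in Example \ref{reduced ex}, which already records that $\sigma(-P_j) = \mathbb{C}$ for every $j$: each $-P_j$ is a copy of the operator $-P_I$ on the compact interval $g(\bar{I}_j)$, with deficiency indices $(1,1)$, and Example \ref{reduced ex} concludes $\sigma(-P_I) = \mathbb{C}$. Taking the union over $j$ then yields $\sigma(\bar{T}) = \mathbb{C}$ at once. The last step is to pass from $\bar{T}$ back to $T$ itself. Here the only point to verify is $\sigma(T) = \sigma(\bar{T})$, and this is automatic: if some $\lambda$ belonged to the resolvent set $\rho(T)$, then $(T-\lambda)^{-1}$ would be a bounded operator defined on all of $\mathcal{H}$, hence closed, forcing $T-\lambda$ and therefore $T$ to be closed, so that $T=\bar{T}$ and $\lambda\in\rho(\bar{T})$. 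Thus $\rho(T)\subseteq\rho(\bar{T})=\emptyset$, and $\sigma(T)=\mathbb{C}$ follows.

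I do not anticipate a serious obstacle, since the hard analytic work is already packaged into Theorem \ref{deficiency indices sa} and Example \ref{reduced ex}. The one subtlety worth flagging is that the deficiency-index count in Theorem \ref{deficiency indices sa}(2) alone only forces the \emph{non-real} numbers into the spectrum (via non-surjectivity of $\bar{T}\mp z$ for $\mathrm{Im}\,z\neq 0$); it is the stronger conclusion $\sigma(-P_I)=\mathbb{C}$ of Example \ref{reduced ex} that is needed to capture the real points as well. Accordingly, the proof must genuinely route through the full spectral statement of that example rather than through the deficiency indices by themselves.
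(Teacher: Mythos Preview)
Your proposal is correct and matches the paper's own proof, which simply cites Theorem~\ref{deficiency indices sa}~(1) together with Example~\ref{reduced ex} (and notes an alternative reference to \cite[Theorem 2.1 (iii)]{Ar07}). Your added care about passing from $\bar{T}$ to $T$ and about needing the full $\sigma(-P_I)=\mathbb{C}$ rather than just the deficiency-index count is accurate elaboration of details the paper leaves implicit.
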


\begin{proof}
This follows from Theorem \ref{deficiency indices sa} (1) and Example \ref{reduced ex}.
Note that the corollary also follows from \cite[Theorem 2.1 (iii)]{Ar07}.
\end{proof}

\begin{proof}[Proof of Theorem \ref{deficiency indices sa}]
(1) We first note that since $g(0)=g(2\pi)$, Rolle's theorem implies that there exists at least one zero of $g'$ in $(0,2\pi)$, thus $n\geq 1$.
Also, remark that $g(\bar{I}_j)$ is a compact interval with endpoints $g(a_j)$ and $g(a_{j+1})$.
Let $g_j:=g|_{\bar{I}_j}$.
We regard each function in $\mathcal{K}_1$ as a periodic function with period $2\pi$ defined on the whole $\mathbb{R}$.
Define the linear operator $V_j:\mathcal{K}_1\to L^2(g(\bar{I}_j),dE)$ by
\[
(V_jf)(E) := \frac{f(g_j^{-1}(E))}{\sqrt{2\pi |g'(g_j^{-1}(E))|}},\ \ \ \ \ f\in\mathcal{K}_1,\ \ E\in g(I_j).
\]
Since
\[
\|V_jf\|^2 = \int_{a_j}^{a_{j+1}}|f(k)|^2\,\frac{dk}{2\pi},\ \ \ \ \ f\in\mathcal{K}_1,\
\]
the map
\[
V:\mathcal{K}_1\to\bigoplus_{j=1}^nL^2(g(\bar{I}_j),dE),\ \ \ \ \ f\mapsto \{V_jf\}_{j=1}^n
\]
is a unitary operator.

In order to show $V_jf\in D(P_j)$ for a fixed function $f\in D(T)$, we have only to check the following three conditions:
\begin{itemize}{}{}
\item[(i)] $V_jf$ is absolutely continuous on the compact interval $g(\bar{I}_j)$,
\item[(ii)] the derivative of $V_jf$ is in $L^2(g(\bar{I}_j),dE)$,
\item[(iii)] $\lim_{E\to g(a_j)}(V_jf)(E)=\lim_{E\to g(a_{j+1})}(V_jf)(E)=0$.
\end{itemize}
We first show (iii). 
Since $f\in D(T)$, we have $f/g'\in D(P)$, in particular, $f/g'$ is continuous on $[0,2\pi]$.
Thus
\[
(V_jf)(E) = \frac{f(g_j^{-1}(E))}{\sqrt{2\pi |g'(g_j^{-1}(E))|}} = \frac{f(g_j^{-1}(E))}{|g'(g_j^{-1}(E))|}\cdot \sqrt{\frac{|g'(g_j^{-1}(E))|}{2\pi}} \to 0,
\]
as $E$ tends to $g(a_j)$ and $g(a_{j+1})$.
Thus (iii) holds.
We next show (ii).
By a direct computation, we see that
\[
i\frac{d}{dE}(V_jf)(E) = (V_jTf)(E) \in L^2(g(\bar{I}_j),dE) \subset L^1(g(\bar{I}_j),dE),
\]
whence (ii) follows.
We split the proof of (i) into two cases.
If $g'(E)>0$ on $g(I_j)$, then for any $g(a_j)<E<g(a_{j+1})$ and for any $\varepsilon>0$, $V_jf$ is absolutely continuous on the compact interval $[g(a_j)+\varepsilon,E]$, 
thus
\[
\int_{g(a_j)+\varepsilon}^E \frac{d}{dF}(V_jf)(F)\, dF = (V_jf)(E)-(V_jf)(g(a_j)+\varepsilon).
\]
By (iii), we get
\[
\int_{g(a_j)}^E \frac{d}{dF}(V_jf)(F)\,dF = \lim_{\varepsilon\to 0}\int_{g(a_j)+\varepsilon}^E \frac{d}{dF}(V_jf)(F)\, dF = (V_jf)(E),
\]
which also holds for $E=g(a_j), g(a_{j+1})$ by (iii).
Hence $V_jf$ is absolutely continuous on $g(\bar{I}_j)$.
Similarly, one can show that $V_jf$ is absolutely continuous on $g(\bar{I}_j)$ if $g'(E)<0$ on $g(I_j)$.
Therefore $V_jf\in D(P_j)$ follows.
Moreover, in the above proof, we have shown $(-P_j)V_jf=V_jTf$.
Thus $V_jT\subset (-P_j)V_j$, 
whence $VTV^{-1}\subset \oplus_{j=1}^n(-P_j)$ holds.

On the other hand, since $C_0^{\infty}(g(I_j))$ is a core for $-P_j$, $\oplus_{j=1}^n C_0^{\infty}(g(I_j))$ is a core for $\oplus_{j=1}^n(-P_j)$.
Moreover $D(TV^{-1})$ contains $\oplus_{j=1}^n C_0^{\infty}(g(I_j))$.
Hence we conclude that $V\bar{T}V^{-1}=\oplus_{j=1}^n(-P_j)$.
The other operator equality $Vg(k)V^{-1}=\oplus_{j=1}^nE$ immediately follows from the definition of $V$.

(2) Since the deficiency indices of $\oplus_{j=1}^n(-P_j)$ are $n$, we get $d_{\pm}(T)=d_{\pm}(\bar{T})=n$.
This completes the proof.
\end{proof}

By Theorem \ref{deficiency indices sa}, we obtain $d_{\pm}(T_1)=d_{\pm}(\hat{T}_1)=2$.
More specifically, there exists a unitary operator $V:\mathcal{K}_1\to L^2([-1,1],dE)\oplus L^2([-1,1],dE)$ such that
\[
V\overline{\hat{T}_1}V^{-1} = (-P_{[-1,1]})\oplus (-P_{[-1,1]}),\ \ \ \ \ V\hat{H}_1V^{-1} = E\oplus E.
\]
This tells us that the subspace $\ker{\left(\hat{T}_1^*\mp i\right)}$ is the two-dimensional subspace spanned by 
\[
\sqrt{\sin{k}}\,e^{\pm \cos{k}}\chi_{[0,\pi]}(k)\ \ \textrm{and}\ \  \sqrt{-\sin{k}}\,e^{\pm \cos{k}}\chi_{[\pi,2\pi]}(k),
\]
where $\chi_A$ is the characteristic function for a subset $A$ of $\mathbb{R}$.
It follows from Corollary \ref{spectrum sa} that $\sigma(T_1)=\mathbb{C}$.

We remark that all self-adjoint extensions of $T_1$ are time operators of $H_1$, but none of them are strong time operators of $H_1$.
The latter follows from the fact that $\sigma(H_1)=[-1,1]$ and Remark \ref{remark sa} (4).\\

\subsection{The case  $d\geq 2$}
Let us consider the case $d\geq 2$.
We identify $\mathcal{H}_d=\ell^2(\mathbb{Z}^d)$ with $\otimes^d\mathcal{K}_1=\otimes^d\ell^2(\mathbb{Z})$.
Under the identification, we have
\[
H_d = \frac{1}{d}\sum_{j=1}^d 1\otimes \cdots \otimes 1 \otimes \overset{j\,{\rm th}}{H_1} \otimes 1\otimes  \cdots \otimes 1.
\]
Let $\alpha=(\alpha_1,\cdots,\alpha_d)$ be a $d$-tuple of real numbers satisfying $\sum_{j=1}^d\alpha_j=1$.
Then the operator
\[
T_d^{(\alpha)} := d\cdot\sum_{j=1}^d \alpha_j\cdot 1\otimes \cdots \otimes 1 \otimes \overset{j\,{\rm th}}{T_1} \otimes 1\otimes  \cdots \otimes 1.
\]
with domain $\otimes_{\rm alg}^dD(T_1)$ is a strong time operator of $H_d$.
Here, $\otimes_{\rm alg}$ denotes the algebraic tensor product.

To compute the deficiency indices of $T_d^{(\alpha)}$, we may assume that $\alpha_1\not=1$.
Note that $\left(T_d^{(\alpha)}\right)^*$ is an extension of the operator 
\[
d\cdot\sum_{j=1}^d\alpha_j\cdot 1\otimes \cdots \otimes 1 \otimes \overset{j\,{\rm th}}{T_1^*} \otimes 1\otimes  \cdots \otimes 1
\]
with domain $\otimes_{\rm alg}^dD(T_1^*)$.
Hence for any $0<\varepsilon<1/2$, there are vectors $\psi_{\varepsilon}^{(\pm)}, \phi_{\varepsilon}^{(\pm)}\in D(T_1^*)$ 
such that 
\[
T_1^*\psi_{\varepsilon}^{(\pm)}=\pm i\varepsilon\psi_{\varepsilon}^{(\pm)},\ \ \ \ \ 
T_1^*\phi_{\varepsilon}^{(\pm)}=\pm i\frac{1-d\alpha_1\varepsilon}{d\sum_{j=2}^d\alpha_j}\phi_{\varepsilon}^{(\pm)}
\]
hold.
Put $\Psi_{\varepsilon}^{(\pm)}:=\psi_{\varepsilon}^{(\pm)}\otimes\phi_{\varepsilon}^{(\pm)}\otimes\cdots\otimes\phi_{\varepsilon}^{(\pm)}\in\otimes_{\rm alg}^dD(T_1^*)$.
Then $T_d^{(\alpha)}\Psi_{\varepsilon}^{(\pm)}=\pm i\Psi_{\varepsilon}^{(\pm)}$ follows.
Since the set $\{\Psi_{\varepsilon}^{(\pm)} \mid 0<\varepsilon<1/2\}$ is linearly independent, we get $d_{\pm}\left(T_d^{(\alpha)}\right)=\infty$.
By a similar argument to Example \ref{reduced ex}, we obtain $\sigma\left(T_d^{(\alpha)}\right)=\mathbb{C}$.

As is the case for $d=1$, all self-adjoint extensions of $T_d^{(\alpha)}$ are time operators of $H_d$, 
but none of them are strong time operators of $H_d$.

\section{Time operators for discrete-time quantum walks}

In this section, we consider discrete-time quantum walks.
We first construct strong time operators for such models.
Then we compute their deficiency indices and spectra.
We also discuss a relation between the strong time operators and the winding numbers.

\subsection{Construction of time operators\label{construction discrete}}

Recall that $\mathbb{T}$ denotes the set of complex numbers of modulus one.
Similarly to Theorem \ref{lemma sa}, we first construct a strong time operator of the unitary operator 
$\lambda(k)$ on $\mathcal{K}_1$,
where $\lambda:\mathbb{R} \to \mathbb{T}$ is a periodic function
with period $2\pi$.  
Note that for any continuous periodic function $\lambda:\mathbb{R}\to \mathbb{T}$ with period $2\pi$, 
there exists a unique integer $m\in\mathbb{Z}$ 
so that 
\begin{equation}
\label{wind}
\lambda(k)=e^{i(mk+\theta(k))}, \quad k\in\mathbb{R}
\end{equation}
with some continuous periodic function $\theta:\mathbb{R}\to\mathbb{R}$ with period $2\pi$.  
The integer $m$ is called the {\it winding number} of $\lambda$.
For a proof, see e.g., \cite[Lemma 3.5.14]{Mu}.

Let $P$ be the operator on $\mathcal{K}_1$ defined in Subsection \ref{construction continuous}.

\begin{theorem}\label{lemma unitary}
Let
$\lambda:\mathbb{R}\to \mathbb{T}$
be a twice continuously differentiable periodic function with period $2\pi$ whose derivative 
$\lambda^\prime$ has at most finitely many zeros in $[0,2\pi)$.
Let $m$ be the winding number of $\lambda$, and let $\theta$ be the function given in \eqref{wind}.  
Then the operator 
\begin{align*}
T  &:= 
\frac{1}{2}
	\left(\frac{1}{m+\theta'(k)}P+P\frac{1}{m+\theta'(k)}\right) 
\end{align*}
on $\mathcal{K}_1$ with natural domain is a strong time operator of the unitary operator $\lambda(k)$. 
\end{theorem}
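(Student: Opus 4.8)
The plan is to mirror the proof of Theorem~\ref{lemma sa}, with the real symbol $g$ there replaced by the real phase $\varphi(k):=mk+\theta(k)$, so that $\lambda(k)=e^{i\varphi(k)}$ and $\varphi'(k)=m+\theta'(k)$. Two structural observations drive everything. First, $\lambda$ is genuinely $2\pi$-periodic precisely because $m\in\mathbb{Z}$: although $\varphi$ itself is not periodic, $\varphi(k+2\pi)=\varphi(k)+2\pi m$ forces $e^{i\varphi(k+2\pi)}=e^{i\varphi(k)}$, and $\varphi$ enters both $T$ and $\lambda$ only through the periodic quantities $\varphi'=m+\theta'$ and $e^{i\varphi}$. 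Second, since $\lambda'=i\varphi'\lambda$, the zeros of $\varphi'$ coincide with those of $\lambda'$ and hence form the same finite set; thus $1/(m+\theta'(k))$ is a periodic function that is continuously differentiable off finitely many points, exactly as $1/g'$ was.

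First I would prove that $T$ is symmetric, verbatim as in Theorem~\ref{lemma sa}: writing $Z\subset(0,2\pi)$ for the finite zero set of $\varphi'$ and $\Omega:=(0,2\pi)\setminus Z$, the space $C_0^{\infty}(\Omega)$ lies in the natural domain and is dense in $\mathcal{K}_1$, so $D(T)$ is dense, while the two summands $\frac{1}{\varphi'}P$ and $P\frac{1}{\varphi'}$ are formal adjoints of one another, giving $T\subset T^{*}$.

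The core is the operator equality $\lambda(k)^{*}T\lambda(k)=T+1$. I would first verify that multiplication by $\lambda(k)$ and by $\lambda(k)^{*}$ leaves $D(P)$---and therefore the natural domain $D(T)$---invariant: this is precisely where periodicity of $\lambda$ (equivalently, the integrality of the winding number) is used, since it preserves the boundary condition $f(0)=f(2\pi)$, while the $C^2$ hypothesis preserves absolute continuity and keeps $(\lambda f)'\in\mathcal{K}_1$. Because $\lambda^{*}$ also preserves $D(T)$, one gets $D(\lambda^{*}T\lambda)=D(T)=D(T+1)$, reducing the claim to a pointwise identity on $D(T)$. For that, the single ingredient is the intertwining relation $P(\lambda f)=\lambda\,Pf+\varphi'\,\lambda f$, immediate from $\lambda'=i\varphi'\lambda$ and $P=-i\,d/dk$. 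Applying it once to $f$ and once to $g:=\frac{1}{\varphi'}f$ (and using that $\lambda$ commutes with the multiplication operator $\frac{1}{\varphi'}$) yields
\[
\lambda^{*}\frac{1}{\varphi'}P(\lambda f)=f+\frac{1}{\varphi'}Pf,\qquad
\lambda^{*}P\Bigl(\frac{1}{\varphi'}\lambda f\Bigr)=f+P\frac{1}{\varphi'}f,
\]
and averaging these gives $\lambda^{*}T\lambda f=f+Tf=(T+1)f$.

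I expect the main obstacle to be the domain bookkeeping rather than the algebra: one must confirm that conjugation by $\lambda$ reproduces the natural domain exactly, so that an honest operator equality---not merely equality on a core---is obtained. It is here that the hypothesis $m\in\mathbb{Z}$ is indispensable, since a non-integer ``winding number'' would destroy periodicity of $\lambda$ and with it the invariance of the boundary condition $f(0)=f(2\pi)$.
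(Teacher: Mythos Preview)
Your proposal is correct and follows essentially the same route as the paper: define $g(k)=mk+\theta(k)$ so that $\lambda=e^{ig}$ and $g'=m+\theta'$, then rerun the computation from Theorem~\ref{lemma sa} verbatim with this $g$. The paper's proof is in fact terser than yours---it simply observes that $g$ is $C^2$ and says the argument of Theorem~\ref{lemma sa} carries over---whereas you have made explicit the one point that genuinely needs comment, namely that although $g$ is no longer periodic when $m\neq 0$, the multiplier $\lambda=e^{ig}$ still is, so invariance of $D(P)$ (and hence the full operator equality) survives.
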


\begin{proof}
Define a continuous function $g:\mathbb{R} \to \mathbb{R}$ by
\begin{equation}\label{gkm} 
g(k) = m k + \theta(k), \quad k \in \mathbb{R}. 
\end{equation}
Note that $g$ is a twice continuously differentiable function, which follows from the following two facts.
First, differentiability and continuity are both local properties.
Second, the logarithm function is locally well-defined in $\mathbb{C}\setminus\{0\}$.
Because $\lambda(k) = e^{ig(k)}$, Theorem \ref{lemma unitary} can now be proved by the same manner as Theorem \ref{lemma sa}.
\end{proof}

\begin{remark}
Since $\lambda^\prime /\lambda  = i g^\prime$, 
the time operator defined in Theorem \ref{lemma unitary} 
can also be expressed as
\begin{equation}
\label{alternative form} 
 T = \frac{i}{2}\left(\frac{\lambda(k)}{\lambda^\prime(k)}P+P\frac{\lambda(k)}{\lambda^\prime(k)}\right). 
\end{equation}	
\end{remark}

Applying Theorem \ref{lemma unitary},
we construct a strong time operator
of a one-dimensional homogeneous quantum walk.
Let 
$\mathcal{H} =\ell^2(\mathbb{Z};\mathbb{C}^2)$
be the Hilbert space of states for the quantum walker.  
Identifying $\mathcal{H}$ with 
$\ell^2(\mathbb{Z}) \oplus \ell^2(\mathbb{Z})$,
we define a shift operator $S$ on $\mathcal{H}$
as 
$S := \begin{pmatrix}
L&0\\
0&L^*
\end{pmatrix}$,
where $L$ is the unitary operator on $\ell^2(\mathbb{Z})$ defined in Example \ref{toy}. 
Taking a $2\times 2$ unitary matrix 
$C=\begin{pmatrix}
a&b\\
c&d
\end{pmatrix}$,
we define a coin operator on $\mathcal{H}$
as the multiplication operator by $C$,
and we denote it by the same symbol. 
Then the time evolution operator of the quantum walk is defined by $U=SC$.

Let $\mathcal{K}$ be the Hilbert space of square integrable functions $f:[0,2\pi]\to\mathbb{C}^2$ with norm
\[
\|f\| := \left(\int_0^{2\pi}\|f(k)\|_{\mathbb{C}^2}^2\,\frac{dk}{2\pi}\right)^{\frac{1}{2}}.
\]
We define the Fourier transform $\mathscr{F}:\mathcal{H}\to\mathcal{K}$ as a unitary operator so that 
\[
(\mathscr{F}\psi)(k) := \sum_{x\in\mathbb{Z}}e^{-ikx}\psi(x),\ \ \ \ \ k\in[0,2\pi]
\]
for $\psi\in\mathcal{H}$ with finite support.
Then the Fourier transform $\mathscr{F}U\mathscr{F}^{-1}$ of $U$
is the multiplication operator on $\mathcal{K}$ by 
the 2$\times$2 unitary matrix  
\[
\hat{U}(k):=\begin{pmatrix}
e^{ik}a&e^{ik}b\\
e^{-ik}c&e^{-ik}d
\end{pmatrix},\ \ \ \ \ k\in [0,2\pi].
\]
For each $k\in\mathbb{R}$, the unitary matrix $\hat{U}(k)$ has exactly two eigenvalues $\lambda_1(k), \lambda_2(k)$.
Let $v_1(k), v_2(k)$ be the corresponding normalized mutually orthogonal eigenvectors.
By a direct computation (see e.g., Lemma \ref{lemma ev} below), we may assume that the map
$\mathbb{R}\ni k\mapsto \lambda_j(k)\in \mathbb{C}$
is an analytic periodic function with period $2\pi$ for each $j=1,2$.
Similarly, we may assume that the map
$\mathbb{R}\ni k\mapsto v_j(k)\in \mathbb{C}^2$
is a Borel periodic function with period $2\pi$ for each $j=1,2$.
Define a $2\times 2$ unitary matrix $W(k)$ by $W(k):=\bigl(v_1(k),v_2(k)\bigr)$.
Then we obtain
\[
W(k)^{-1}\hat{U}(k)W(k) = 
\begin{pmatrix}
\lambda_1(k)&0\\
0&\lambda_2(k)
\end{pmatrix},\ \ \ \ \ k\in[0,2\pi].
\]
To construct a strong time operator of $U$, it is sufficient to find a strong time operator of the multiplication operator $\lambda_j(k)$ 
on the Hilbert space $\mathcal{K}_1= L^2([0,2\pi],dk/2\pi)$,
because $\mathcal{K} = \mathcal{K}_1 \oplus \mathcal{K}_1$.

As will be seen in Lemma \ref{lemma ev} below,
$\lambda_j(k)$ are constants if and only if $a=0$. 
If $0 < |a| \leq 1$,  
$\lambda_j'(k)$ has at most finitely many zeros in $[0,2\pi)$,
because the map
$
\mathbb{R}\ni k \mapsto \lambda_j'(k)\in\mathbb{C}
$
is a non-constant analytic function. 
Using Theorem \ref{lemma unitary} and the alternative expression \eqref{alternative form},
we can define a strong time operator of $\lambda_j(k)$ by
\begin{equation}
\label{defTj}
\hat{T}_j := \frac{i}{2}\left(\frac{\lambda_j(k)}{\lambda_j'(k)}P+P\frac{\lambda_j(k)}{\lambda_j'(k)}\right)
\end{equation}
for each $j=1,2$.
Let $W$ be the multiplication operator on $\mathcal{K}$ by $W(k)$.
If $0 < |a| \leq 1$, then we can conclude that
\begin{equation}\label{Time op. DTQW}
T :=\mathscr{F}^{-1}W
\begin{pmatrix}
\hat{T}_1&0\\
0&\hat{T}_2
\end{pmatrix}
W^{-1}\mathscr{F}
\end{equation}
is a strong time operator of $U$.

To make the expression \eqref{defTj} explicit,
we shall calculate the winding numbers $m_j$ of $\lambda_j$ and
functions that satisfy \eqref{wind} with $\lambda$, $m$, and $\theta$
replaced by $\lambda_j$, $m_j$, and $\theta_j$. 
To this end, we use $\alpha\in [0,2\pi)$ and $\delta \in [0,2\pi)$
to denote the arguments of $a$ and $\det C$.
Then we can parametrize $C$ as
\[ C = \begin{pmatrix} 
|a|e^{i \alpha} & b \\ - \bar{b}e^{i\delta} & |a|e^{i(\delta-\alpha)} \end{pmatrix}. \]
By direct calculation,
we obtain the following lemma. 

\begin{lemma}
\label{lemma ev}
\begin{itemize}
\item[(1)] If $a = 0$, then 
$\lambda_1(k) \equiv e^{i(\pi+\delta)/2}$
and $\lambda_2(k) \equiv e^{i(-\pi + \delta)/2}$. 
\item[(2)] If $0 < |a| < 1$, then
\[ \lambda_j(k) 
= e^{i \delta/2}\{\tau(k) +i(-1)^{j-1} \sqrt{1-\tau(k)^2} \},
\quad j =1,2, \]
where $\tau(k) := |a| \cos (k + \alpha -\delta/2)$. 
\item[(3)] If $|a|=1$, then
$\lambda_1(k) = e^{i(k+\alpha)} $
and $\lambda_2(k) = e^{i(-k+\delta-\alpha)} $.  
\end{itemize}
\end{lemma}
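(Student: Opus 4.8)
The plan is to diagonalize $\hat{U}(k)$ by solving its characteristic equation, since the eigenvalues of a $2\times 2$ matrix are determined entirely by its trace and determinant. First I would compute these two quantities from the given parametrization of $C$. Using $a=|a|e^{i\alpha}$ and $d=|a|e^{i(\delta-\alpha)}$, one finds
\[
{\rm tr}\,\hat{U}(k) = e^{ik}a + e^{-ik}d = 2|a|e^{i\delta/2}\cos\!\left(k+\alpha-\tfrac{\delta}{2}\right) = 2e^{i\delta/2}\tau(k),
\]
while $\det\hat{U}(k) = ad-bc = \det C$; and from $c=-\bar{b}e^{i\delta}$ together with $|a|^2+|b|^2=1$ one gets $\det C = (|a|^2+|b|^2)e^{i\delta}=e^{i\delta}$. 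Hence the characteristic equation reads $\lambda^2 - 2e^{i\delta/2}\tau(k)\,\lambda + e^{i\delta}=0$.

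The key simplifying step is the substitution $\lambda = e^{i\delta/2}\mu$, which clears the phase $e^{i\delta}$ and reduces the equation to $\mu^2 - 2\tau(k)\mu + 1 = 0$ with the real coefficient $\tau(k)$. Solving gives $\mu = \tau(k)\pm\sqrt{\tau(k)^2-1}$. Since $|\tau(k)| = |a|\,|\cos(\cdots)| \le |a| \le 1$, the discriminant is non-positive, so $\sqrt{\tau(k)^2-1} = i\sqrt{1-\tau(k)^2}$ and
\[
\lambda = e^{i\delta/2}\left(\tau(k)\pm i\sqrt{1-\tau(k)^2}\right).
\]
This is precisely the formula in (2), the two sign choices corresponding to $j=1,2$ through $(-1)^{j-1}$, which I would fix as the definition of the labeling.

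It then remains to read off the three cases. For (1), $a=0$ forces $\tau\equiv 0$, so $\mu=\pm i = e^{\pm i\pi/2}$ and $\lambda = e^{i(\delta\pm\pi)/2}$, matching the stated constants. Case (2) is exactly the formula just derived. For (3) I would \emph{not} pass to the limit $|a|\to 1$ inside the formula of (2): there $\sqrt{1-\tau(k)^2}$ becomes $|\sin(k+\alpha-\delta/2)|$, whose absolute value destroys both analyticity and a clean closed form, and in fact flips the roles of $\lambda_1$ and $\lambda_2$ across the zeros of the sine. Instead I argue directly: unitarity forces $|b|^2 = 1-|a|^2 = 0$, so $b=c=0$ and $\hat{U}(k)$ is already diagonal with entries $e^{ik}a = e^{i(k+\alpha)}$ and $e^{-ik}d = e^{i(-k+\delta-\alpha)}$.

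The main obstacle here is bookkeeping rather than conceptual depth: one must keep the phase conventions for $\alpha$ and $\delta$ consistent throughout, verify $|\tau(k)|\le 1$ to justify the purely imaginary square root, and ensure that the labeling $\lambda_1$ versus $\lambda_2$ is coherent across the three cases (and compatible with the analyticity and periodicity used in the surrounding construction of $W(k)$). The only genuinely non-routine point is recognizing that case (3) must be handled separately, rather than as the boundary value of case (2).
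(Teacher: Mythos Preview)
Your proof is correct and is exactly the ``direct calculation'' the paper alludes to but does not spell out. The computation of the trace and determinant, the substitution $\lambda=e^{i\delta/2}\mu$, and the separate treatment of $|a|=1$ are all sound, and your remark about why case~(3) should not be obtained as a limit of case~(2) is a useful observation.
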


\begin{remark}\label{conclusion from lemma}
Lemma \ref{lemma ev} tells us the following consequences.
\begin{itemize}
\item[(1)] If $a = 0$, then $\sigma(U)$ consists of two distinct eigenvalues $e^{i(\pi+\delta)/2}$ and $e^{i(-\pi + \delta)/2}$.
Hence, $U$ has no strong time operators, but $U$ has a bounded self-adjoint time operator. 
The former follows from Corollary \ref{no eigenvalues}.
The latter follows from Case 1 in the proof of Theorem \ref{th:te}.
\item[(2)] If $0 < |a| < 1$, then $\sigma(U)\not=\mathbb{T}$.
In particular, $U$ does not have a self-adjoint strong time operator.
This follows from Proposition \ref{prS}.
\end{itemize}
\end{remark}

\begin{lemma}
\label{lemma wind}
\begin{itemize}
\item[(1)] If $a=0$, 
then $m_1 = m_2 = 0$, 
and $\theta_1^\prime = \theta_2^\prime \equiv 0$. 
\item[(2)] If $0 <|a| < 1$,
then $m_1 = m_2 = 0$ and 
\begin{equation} 
\label{thetaprime}
\theta_1^\prime(k) = \frac{d}{dk} \arccos \tau(k),
\quad \theta^\prime_2(k) = - \theta_1^\prime(k), \quad k \in [0,2\pi).
\end{equation}   
\item[(3)] If $|a|=1$, then $m_1 = 1$, $m_2 = -1$
and $\theta_1^\prime = \theta_2^\prime \equiv 0$.
\end{itemize}
\end{lemma}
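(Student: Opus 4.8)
The plan is to read off the winding numbers directly from the explicit formulas for $\lambda_1$ and $\lambda_2$ supplied by Lemma \ref{lemma ev}, invoking the uniqueness built into the decomposition \eqref{wind}. The guiding principle is this: once I exhibit \emph{any} integer $m_j$ together with a continuous $2\pi$-periodic function $\theta_j$ satisfying $\lambda_j(k) = e^{i(m_j k + \theta_j(k))}$, the uniqueness asserted in \eqref{wind} (cf.\ \cite[Lemma 3.5.14]{Mu}) forces $m_j$ to be the winding number and $\theta_j$ to be the associated phase, so that $\theta_j'$ may be obtained simply by differentiation.

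For parts (1) and (3) this reading-off is immediate. When $a=0$, Lemma \ref{lemma ev}(1) gives the constants $\lambda_j(k)\equiv e^{i(\pm\pi+\delta)/2}$; writing each as $e^{i(0\cdot k+\theta_j(k))}$ with $\theta_j$ the constant phase shows $m_j=0$ and $\theta_j'\equiv 0$. When $|a|=1$, Lemma \ref{lemma ev}(3) gives $\lambda_1(k)=e^{i(k+\alpha)}$ and $\lambda_2(k)=e^{i(-k+\delta-\alpha)}$, which are already of the form \eqref{wind} with $m_1=1$, $\theta_1\equiv\alpha$ and $m_2=-1$, $\theta_2\equiv\delta-\alpha$; hence $\theta_1'=\theta_2'\equiv 0$.

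The substance lies in part (2), where $0<|a|<1$. Here I would use the elementary identity
\[
\tau \pm i\sqrt{1-\tau^2} = e^{\pm i\arccos\tau}, \qquad \tau\in[-1,1],
\]
which follows from $\cos(\arccos\tau)=\tau$ and $\sin(\arccos\tau)=\sqrt{1-\tau^2}$, the latter because $\arccos$ takes values in $[0,\pi]$, where sine is nonnegative. Applying this with $\tau=\tau(k)$ to the formula of Lemma \ref{lemma ev}(2) rewrites
\[
\lambda_j(k) = e^{i\delta/2}\,e^{i(-1)^{j-1}\arccos\tau(k)} = e^{i\bigl(\delta/2 + (-1)^{j-1}\arccos\tau(k)\bigr)},
\]
so I would set $\theta_j(k):=\delta/2+(-1)^{j-1}\arccos\tau(k)$ and $m_j:=0$.

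The one point requiring care, and the main (if mild) obstacle, is confirming that this is a legitimate instance of \eqref{wind}, i.e.\ that $\theta_j$ is a genuine continuous $2\pi$-periodic function. This is precisely where the strict inequality $|a|<1$ enters: since $|\tau(k)|\le|a|<1$ for every $k$, the argument of $\arccos$ never reaches $\pm 1$, so $\arccos\tau(\cdot)$ is continuous (in fact smooth) and inherits the $2\pi$-periodicity of $\tau$. Consequently $\theta_j$ is continuous and $2\pi$-periodic, and the uniqueness in \eqref{wind} yields $m_1=m_2=0$. Differentiating $\theta_j$ then gives $\theta_1'(k)=\frac{d}{dk}\arccos\tau(k)$ and $\theta_2'(k)=-\theta_1'(k)$, which completes the proof.
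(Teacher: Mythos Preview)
Your proof is correct. For parts (1) and (3) you do exactly what the paper does: read the decomposition \eqref{wind} straight off the explicit formulas in Lemma~\ref{lemma ev}. For part (2) you take a slightly different route. The paper first computes the winding number via the contour-integral formula
\[
m_j=\frac{1}{2\pi i}\int_0^{2\pi}\frac{\lambda_j'(k)}{\lambda_j(k)}\,dk,
\]
recognizes the integrand as $(-1)^{j-1}\frac{d}{dk}\arccos\tau(k)$, and concludes $m_j=0$ by periodicity; only afterward does it extract $\theta_j'$ from the logarithmic-derivative relation $m_j+\theta_j'=-i\lambda_j'/\lambda_j$. You instead exhibit the phase $\theta_j(k)=\delta/2+(-1)^{j-1}\arccos\tau(k)$ directly via the identity $\tau\pm i\sqrt{1-\tau^2}=e^{\pm i\arccos\tau}$, verify it is continuous and $2\pi$-periodic (using $|a|<1$), and let the uniqueness in \eqref{wind} deliver $m_j=0$ without any integration. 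Your approach is a bit more direct and has the bonus of producing $\theta_j$ itself, not just its derivative; the paper's approach is more systematic in that it illustrates the general relation \eqref{windtheta}, which is what feeds into the time-operator formula in Theorem~\ref{lemma unitary}.
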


\begin{proof}
By definition, we obtain
\begin{equation}
\label{windtheta}
m_j + \theta_j^\prime = -i \lambda_j^\prime/\lambda_j.
\end{equation} 
We first consider the case of $a=0$. 
By Lemma \ref{lemma ev} (1), the winding numbers are zero.
Because $\lambda_j^\prime/\lambda_j \equiv 0$,
we observe from \eqref{windtheta} that $\theta_j^\prime \equiv 0$. 
This completes (1).   

We next consider the case of $0 < |a| < 1$. 
By Lemma \ref{lemma ev} (2),
the winding number is calculated by
\begin{align*} 
m_j &= \frac{1}{2\pi i}
	\int_0^{2\pi}\frac{\lambda_j^\prime(k)}{\lambda_j(k)}\,dk 
	= \frac{(-1)^{j-1}}{2\pi}
	\int_0^{2\pi}\frac{-\tau^\prime(k)}
							{\sqrt{1-\tau(k)^2}}\,dk. \\
&= \frac{(-1)^{j-1}}{2\pi}\int_0^{2\pi}\frac{d}{dk} \arccos \tau(k)\,dk\\
&= 0.
\end{align*}
Combining this with \eqref{windtheta},
we have $\theta_2^\prime(k) = - \theta_1^\prime(k)$ and
\[ \theta_1^\prime(k) =
 - \frac{\tau^\prime(k)}
							{\sqrt{1-\tau(k)^2}}
= \frac{d}{dk} \arccos \tau(k), \]
which completes \eqref{thetaprime}.  

Finally, we consider the case of $|a|=1$. 
Lemma \ref{lemma ev} (3) implies that $m_1 = 1$ and that $m_2=-1$. 
Because $-i \lambda_j^\prime/\lambda_j \equiv (-1)^{j-1}=m_j$,
(3) is completed by \eqref{windtheta}.
\end{proof}

As mentioned, we can define $\hat T_j$ by \eqref{defTj}
if $0 < |a| \leq 1$. 
By Lemma \ref{lemma wind}, we obtain the following. 

\begin{theorem}
\label{TimeopofQW}
Let $\hat T_1$ and $\hat T_2$ be strong time operators defined in \eqref{defTj}. 
\begin{itemize}
\item[(1)] If $0 <|a| < 1$,
then 
\[ 
\hat T_1 = \frac{1}{2}
\left( \frac{1}{\displaystyle \frac{d}{dk}  \arccos \tau(k)} P
		+ P  \frac{1}{\displaystyle \frac{d}{dk} \arccos \tau(k)} 
		\right),
		\quad \hat T_2 = -  \hat T_1.
\]							
\item[(2)] If $|a|=1$, then $\hat T_1 = P$ and $\hat T_2 = - \hat T_1$.  
\end{itemize}
\end{theorem}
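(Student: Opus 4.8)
The plan is to observe that this theorem is a direct substitution of the winding-number data computed in Lemma \ref{lemma wind} into the canonical form of the strong time operator supplied by Theorem \ref{lemma unitary}. First I would invoke Theorem \ref{lemma unitary} together with the alternative expression \eqref{alternative form}: applied to $\lambda_j$ in place of $\lambda$, these two forms coincide, so the operator $\hat T_j$ defined in \eqref{defTj} can equally be written as
\[
\hat T_j = \frac{1}{2}\left(\frac{1}{m_j+\theta_j'(k)}P + P\frac{1}{m_j+\theta_j'(k)}\right),
\]
where $m_j$ is the winding number of $\lambda_j$ and $\theta_j$ is the function attached to $\lambda_j$ via \eqref{wind}. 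This identification is exactly the content of the remark after Theorem \ref{lemma unitary}, using $\lambda_j'/\lambda_j = ig_j'$ with $g_j = m_j k + \theta_j$.

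For part (1), where $0<|a|<1$, I would then feed in Lemma \ref{lemma wind} (2), which gives $m_1 = m_2 = 0$ and $\theta_1'(k) = \frac{d}{dk}\arccos\tau(k)$ with $\theta_2' = -\theta_1'$. Substituting $m_1 + \theta_1' = \frac{d}{dk}\arccos\tau(k)$ into the displayed form yields the stated expression for $\hat T_1$ verbatim. Since $m_2 + \theta_2' = -(m_1+\theta_1')$, replacing $m_1+\theta_1'$ by its negative simply changes the sign of each summand, so $\hat T_2 = -\hat T_1$ follows immediately.

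For part (2), where $|a|=1$, I would use Lemma \ref{lemma wind} (3): $m_1 = 1$, $m_2 = -1$, and $\theta_1' = \theta_2' \equiv 0$. Hence $m_1 + \theta_1' \equiv 1$, giving $\hat T_1 = \tfrac{1}{2}(P+P) = P$, and $m_2+\theta_2' \equiv -1$, giving $\hat T_2 = \tfrac{1}{2}(-P-P) = -P = -\hat T_1$. I do not anticipate any genuine obstacle; the only point requiring care is the bookkeeping that matches the $1/(m_j+\theta_j')$ form of the operator with the $\lambda_j/\lambda_j'$ form in \eqref{defTj}, after which both parts reduce to evaluating constants. The result is essentially a transcription of Lemma \ref{lemma wind} through Theorem \ref{lemma unitary}.
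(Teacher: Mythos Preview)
Your proposal is correct and matches the paper's approach exactly: the paper simply states ``By Lemma \ref{lemma wind}, we obtain the following'' before the theorem, treating it as a direct substitution of the winding-number data $m_j$ and $\theta_j'$ from Lemma \ref{lemma wind} into the form of $\hat T_j$ supplied by Theorem \ref{lemma unitary}. You have merely written out the bookkeeping that the paper leaves implicit.
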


\subsection{Deficiency indices and spectra of time operators}
We shall compute the deficiency indices and the spectra of $T$ defined in Theorem \ref{lemma unitary}.
The following structure theorem is an analog of Theorem \ref{deficiency indices sa}, 
but the proof is a little bit more complicated because of the winding number of a function.

\begin{theorem}\label{deficiency indices unitary}
Let $\lambda$, $m$, $\theta$, $T$ be as stated in Theorem \ref{lemma unitary}, and define $g$ by \eqref{gkm}. 
\begin{itemize}{}{}
\item[(1)] If $\lambda^\prime$ has no zeros, then $m\not=0$ and $T$ is self-adjoint. 
Moreover, there exists a unitary operator $V:\mathcal{K}_1\to\oplus_{j=0}^{|m|-1}\mathcal{K}_1$ such that
\[
VTV^{-1} = \bigoplus_{j=0}^{|m|-1}\left(P+\frac{j}{|m|}\right),\ \ \ \ \ 
V\lambda(k)V^{-1} = \bigoplus_{j=0}^{|m|-1} e^{ik}.
\]
\item[(2)] If $\lambda^\prime$ has a zero, then there exists a unitary operator 
$V:\mathcal{K}_1\to\oplus_{j=1}^nL^2\left(g(\bar{I}_j),dE\right)$ such that
\[
V\bar{T}V^{-1} = \bigoplus_{j=1}^nP_j,\ \ \ \ \ 
V\lambda(k)V^{-1} = \bigoplus_{j=1}^n e^{iE},
\]
where  $0\leq a_1<a_2<\cdots<a_n<2\pi$ are the zeros of $\lambda^\prime$
in $[0,2\pi)$, $a_{n+1}:=2\pi+a_1$, 
$I_j:=(a_j,a_{j+1})$ is an open interval,
and $P_j$ is the operator $P_I$ defined in Example \ref{reduced ex}
with $I$ replaced by the closed interval $g(\bar{I}_j)$  for $j=1,\cdots,n$.
\item[(3)] The deficiency indices of $T$ are equal to the number of zeros of $\lambda^\prime$ in $[0,2\pi)$.
In particular, $T$ has a self-adjoint extension.
\end{itemize}
\end{theorem}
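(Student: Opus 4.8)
The plan is to treat the two regimes separately, unified by the energy change of variables $E = g(k)$, where $g(k) = mk + \theta(k)$ as in \eqref{gkm}. Since $\lambda' = ig'\lambda$ and $g' = m + \theta'$, the zeros of $\lambda'$ coincide with those of $g'$, so the dichotomy in the statement is exactly the dichotomy ``$g'$ nowhere zero'' versus ``$g'$ has a zero.'' In the latter case the argument is a transcription of the proof of Theorem \ref{deficiency indices sa}: on each $\bar I_j$ the map $g$ is a $C^2$ diffeomorphism onto the compact interval $g(\bar I_j)$ (because $g'$ keeps a constant sign there), and the normalized pullback $(V_jf)(E) := f(g_j^{-1}(E))/\sqrt{2\pi|g'(g_j^{-1}(E))|}$ assembles into a unitary $V$ intertwining $\bar T$ with $\oplus_j P_j$ and $\lambda(k)$ with $\oplus_j e^{iE}$; the vanishing of $V_jf$ at the endpoints, forced by the factor $\sqrt{|g'|}$ at the zeros $a_j$, reproduces the Dirichlet conditions defining $P_j$. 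The only differences from Theorem \ref{deficiency indices sa} are the sign convention, which turns $-P_j$ into $P_j$, and the fact that $g$ is no longer periodic, which plays no role on a single $\bar I_j$. This settles part (2).

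The genuinely new case is (1), where $g'$ never vanishes. First I would note that $g'$ is then of constant sign, so $\int_0^{2\pi}g'(k)\,dk = 2\pi m \neq 0$, giving $m \neq 0$; moreover $g(k+2\pi) = g(k) + 2\pi m$ forces $g\colon\mathbb{R}\to\mathbb{R}$ to be a global $C^2$ diffeomorphism. Restricting to $[0,2\pi]$, define $W\colon\mathcal{K}_1 \to L^2\bigl(g([0,2\pi])\bigr)$ by $(Wf)(E) := f(g^{-1}(E))/\sqrt{|g'(g^{-1}(E))|}$; a change of variables shows $W$ is unitary, $W\lambda(k)W^{-1} = e^{iE}$, and $WTW^{-1} = -i\,d/dE$. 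The crucial point is the boundary condition: since $\theta$ is $2\pi$-periodic we have $g'(0) = g'(2\pi)$, so the periodicity $f(0) = f(2\pi)$ built into $D(P)$ becomes $(Wf)(g(0)) = (Wf)(g(2\pi))$, a periodic boundary condition on the interval $g([0,2\pi])$ of length $2\pi|m|$. Hence $WTW^{-1}$ is $-i\,d/dE$ with periodic boundary conditions, which is self-adjoint, and therefore so is $T$.

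It remains to identify this operator with $\oplus_{j=0}^{|m|-1}(P + j/|m|)$. On an interval of length $2\pi|m|$ the operator $-i\,d/dE$ with periodic boundary conditions has the orthonormal eigenbasis $e^{inE/|m|}$, $n\in\mathbb{Z}$, with eigenvalues $n/|m|$. Writing $n = q|m| + j$ with $q\in\mathbb{Z}$ and $j\in\{0,\dots,|m|-1\}$, the eigenvalue is $q + j/|m|$, and I would send $e^{inE/|m|}$ to $e^{iqk}$ placed in the $j$-th summand $\mathcal{K}_1$. This prescription defines a unitary carrying $-i\,d/dE$ to $\oplus_j(P + j/|m|)$, and since multiplication by $e^{iE}$ sends $n\mapsto n+|m|$, i.e.\ $q\mapsto q+1$ within a fixed residue $j$, it is carried to multiplication by $e^{ik}$ in each summand; composing with $W$ gives the unitary $V$ of part (1). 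Part (3) then follows by bookkeeping: if $\lambda'$ has no zeros, $T$ is self-adjoint by (1) and $d_\pm(T) = 0$; if $\lambda'$ has $n\geq 1$ zeros, then $d_\pm(T) = d_\pm(\bar T) = \sum_{j=1}^n d_\pm(P_j) = n$ by (2) and Example \ref{reduced ex}, and since $d_+(T) = d_-(T)$, von Neumann's theorem yields a self-adjoint extension. I expect the main obstacle to be the frequency-space decomposition in part (1), in particular keeping the $j/|m|$ shifts and the surviving periodic boundary condition straight, since the remaining cases are either a transcription of Theorem \ref{deficiency indices sa} or routine index bookkeeping.
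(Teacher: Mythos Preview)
Your proposal is correct and follows essentially the same strategy as the paper: the energy change of variables $E=g(k)$ for part (2), and a global straightening of $g$ followed by a Fourier-mode decomposition modulo $|m|$ for part (1). The one organizational difference worth noting is that the paper first proves $T=T^*$ directly from the operator identity $P\,\frac{1}{g'(k)}=\frac{1}{g'(k)}P+\frac{ig''(k)}{g'(k)^2}$ (valid because $1/g'$ is $C^1$ and periodic), and then uses self-adjointness to upgrade the inclusion $\tilde VT\tilde V^{-1}\subset P/m$ to an equality, whereas you argue the domain equality through the boundary condition directly; both routes work, but if you write yours up you should make explicit that $D(T)=D(P)$ (since $1/g'$ is bounded, $C^1$, and periodic) so that the boundary-condition correspondence is genuinely two-sided.
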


Before proving Theorem \ref{deficiency indices unitary},
we state the following corollary. 

\begin{corollary}
\label{corspec}
Let $\lambda$, $m$, $T$ be as stated in Theorem \ref{lemma unitary}.
\begin{itemize}
\item[(1)] If $\lambda^\prime$ has no zeros, 
then $\sigma(T) = \left\{s/m \mid s\in\mathbb{Z}\right\}$. 
\item[(2)] If $\lambda^\prime$ has a zero, 
then $\sigma(T) = \mathbb{C}$. 
\end{itemize}
\end{corollary}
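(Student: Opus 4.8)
The plan is to read off the spectrum of $T$ directly from the structure theorem, Theorem \ref{deficiency indices unitary}, in exactly the way Corollary \ref{spectrum sa} is deduced from Theorem \ref{deficiency indices sa} in the self-adjoint setting. In both cases $T$ (or its closure $\bar T$) is unitarily equivalent to a \emph{finite} orthogonal sum of explicitly known operators, and since the spectrum of a finite direct sum is the union of the spectra of the summands, it suffices to compute the spectrum of each summand.

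For (1), $\lambda'$ has no zeros, so by Theorem \ref{deficiency indices unitary} (1) the operator $T$ is self-adjoint with $m\neq 0$ and $VTV^{-1}=\bigoplus_{j=0}^{|m|-1}\bigl(P+j/|m|\bigr)$. First I would recall that $P=-i\,d/dk$ with the periodic boundary condition $f(0)=f(2\pi)$ has the orthonormal eigenbasis $\{e^{ink}\}_{n\in\mathbb{Z}}$, so that $\sigma(P)=\mathbb{Z}$. Hence $\sigma(P+j/|m|)=\mathbb{Z}+j/|m|$ and
\[
\sigma(T)=\bigcup_{j=0}^{|m|-1}\Bigl(\mathbb{Z}+\frac{j}{|m|}\Bigr).
\]
Finally I would identify this union with $\{s/m\mid s\in\mathbb{Z}\}$ by the division algorithm: writing any $s\in\mathbb{Z}$ as $s=q|m|+j$ with $0\le j<|m|$ gives $s/|m|=q+j/|m|$, and conversely every $n+j/|m|$ equals $(n|m|+j)/|m|$; since $\{s/|m|\}=\{s/m\}$ as $m\neq 0$, this yields the claimed equality.

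For (2), $\lambda'$ has a zero, so by Theorem \ref{deficiency indices unitary} (2) we have $V\bar{T}V^{-1}=\bigoplus_{j=1}^{n}P_j$, where each $P_j$ is the operator $P_I$ of Example \ref{reduced ex} on the compact interval $g(\bar I_j)$. Example \ref{reduced ex} gives $\sigma(-P_j)=\mathbb{C}$; since $\sigma(-P_j)=-\sigma(P_j)$, also $\sigma(P_j)=\mathbb{C}$ (equivalently, $P_j$ is closed symmetric with deficiency indices $(1,1)$, hence not self-adjoint, which forces $\sigma(P_j)=\mathbb{C}$). Therefore $\sigma(\bar{T})=\bigcup_{j=1}^{n}\sigma(P_j)=\mathbb{C}$, and since $\sigma(T)\supseteq\sigma(\bar{T})$ we conclude $\sigma(T)=\mathbb{C}$.

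I expect the only genuinely delicate points to be bookkeeping rather than analysis: in (1), confirming that the union of the $|m|$ shifted copies of $\mathbb{Z}$ is exactly $\frac1m\mathbb{Z}$ with no overlaps or omissions, and in (2), being careful that the spectrum in question is that of the closure (which is why the sign in $\sigma(-P_j)=\sigma(P_j)=\mathbb{C}$ is irrelevant) and that the finiteness of the direct sums is what lets us pass from the union of spectra to the spectrum of the sum without taking a closure.
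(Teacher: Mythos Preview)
Your proposal is correct and follows essentially the same route as the paper: both parts are read off directly from Theorem \ref{deficiency indices unitary} together with $\sigma(P)=\mathbb{Z}$ for (1) and Example \ref{reduced ex} for (2). Your write-up simply makes explicit a few points the paper leaves implicit, namely the division-algorithm identification $\bigcup_{j=0}^{|m|-1}(\mathbb{Z}+j/|m|)=\frac{1}{m}\mathbb{Z}$ and the handling of the sign and closure in $\sigma(P_j)=\mathbb{C}$.
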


\begin{proof}
We prove (1). By Theorem \ref{deficiency indices unitary} (1),
\[ \sigma(T) 
	= \bigcup_{j=0}^{|m|-1} \sigma\left(P + \frac{j}{|m|}\right)
	= \bigcup_{j=0}^{|m|-1} 
		\left\{r + \frac{j}{|m|} \ \Big| \ r \in \mathbb{Z} \right\}
	= \left\{ \frac{s}{m} \ \Big| \  s\in \mathbb{Z} \right\},
\]
where we have used the fact that $\sigma(P) = \mathbb{Z}$. 
(2) follows from Theorem \ref{deficiency indices unitary} (2) and Example \ref{reduced ex}.
\end{proof}

\begin{proof}[Proof of Theorem \ref{deficiency indices unitary}]
(1) If $m=0$, then Rolle's theorem implies that $\theta^\prime$ (and hence $\lambda^\prime$) has at least one zero, which is a contradiction.
Thus $m\not=0$.
We next show that $T$ is self-adjoint.
Since $g'=-i\lambda'/\lambda$, functions $g^\prime$ and $1/g^\prime$
are continuously differentiable periodic functions with period $2\pi$, whence we have
\[
P\frac{1}{g^\prime(k)}
=\frac{1}{g^\prime(k)}P
	+\frac{ig^{\prime\prime}(k)}{g^\prime(k)^2}
\]
as an operator equality.
Thus we obtain
\[
T^* = \left(\frac{1}{g^\prime(k)}P
	+\frac{ig^{\prime\prime}(k)}{2g^\prime(k)^2} \right)^*
= P\frac{1}{g^\prime(k)}
	-\frac{ig^{\prime\prime}(k)}{2g^\prime(k)^2} 
= \frac{1}{g^\prime(k)}P
	+\frac{ig^{\prime\prime}(k)}{2g^\prime(k)^2} = T,
\]
which means that $T$ is self-adjoint.
To see the unitary equivalence, let us define the unitary operator $\tilde{V}:\mathcal{K}_1\to\mathcal{K}_1$ by
\[
(\tilde{V}f)(k) := f\Bigl(g^{-1}(mk+\theta(0))\Bigr)\sqrt{\left|\frac{m}{g'\Bigl(g^{-1}(mk+\theta(0))\Bigr)}\right|}, \ \ \ \ \ f\in\mathcal{K}_1,\ \ k\in [0,2\pi].
\] 
Then the following can be proved in a similar manner to the proof of Theorem \ref{deficiency indices sa} (1):
\[
\tilde{V}T\tilde{V}^{-1} \subset \frac{1}{m}P,\ \ \ \ \ \tilde{V}\lambda(k)\tilde{V}^{-1} = e^{i(mk+\theta(0))}.
\]
Since $\tilde{V}T\tilde{V}^{-1}$ is self-adjoint, it has no proper symmetric extensions, thus the above inclusion must be an equality.
Moreover we see that
\[
e^{-i\theta(0)P/m}\left(\frac{1}{m}P\right)e^{i\theta(0)P/m} = \frac{1}{m}P,\ \ \ \ \ e^{-i\theta(0)P/m}e^{i(mk+\theta(0))}e^{i\theta(0)P/m} = e^{imk}.
\]
For each $j=0,1,\cdots,|m|-1$, we denote by $\mathcal{H}_j$ the closed subspace of $\mathcal{K}_1$ spanned by $\{e^{i(ma+j)k}\in\mathcal{K}_1\mid a\in\mathbb{Z}\}$.
Note that we have the orthogonal decomposition $\mathcal{K}_1=\oplus_{j=0}^{|m|-1}\mathcal{H}_j$.
Since $\mathcal{H}_j$ reduces $P/m$ and $e^{imk}$, the unitary operator $V_j:\mathcal{H}_j\to\mathcal{K}_1$ sending $e^{i(ma+j)k}$ to $e^{iak}$ gives us the following equalities:
\[
V_j\left(\frac{1}{m}P\right)\Big|_{\mathcal{H}_j}V_j^{-1} = P+\frac{j}{m},\ \ \ \ \ V_je^{imk}|_{\mathcal{H}_j}V_j^{-1} = e^{ik}.
\]
Hence, if $m>0$, the unitary operator $V:=\left(\oplus_{j=0}^{|m|-1}V_j\right)e^{-i\theta(0)P/m}\tilde{V}$ satisfies
\[
VTV^{-1} = \bigoplus_{j=0}^{|m|-1}\left(P+\frac{j}{|m|}\right),\ \ \ \ \ V\lambda(k)V^{-1} = \bigoplus_{j=0}^{|m|-1} e^{ik}.
\]
If $m<0$, the operator $V:=\left(V_0\oplus\oplus_{j=1}^{|m|-1}e^{-ik}V_j\right)e^{-i\theta(0)P/m}\tilde{V}$ is the desired unitary operator.

(2) Let $g_j:=g|_{\bar{I}_j}$ for any $j=1, \cdots, n$.
We regard each function in $\mathcal{K}_1$ as a periodic function with period $2\pi$ defined on the whole $\mathbb{R}$.
Define the linear operator 
$V_j:\mathcal{K}_1\to L^2\left(g(\bar{I}_j),dE\right)$ by
\[
(V_jf)(E) := \frac{f(g_j^{-1}(E))}{\sqrt{2\pi |g'(g_j^{-1}(E))|}},\ \ \ \ \ f\in\mathcal{K}_1,\ \ E\in g(I_j).
\]
Since
\[
\|V_jf\|^2 = \int_{a_j}^{a_{j+1}}|f(k)|^2\,\frac{dk}{2\pi},\ \ \ \ \ f\in\mathcal{K}_1,\
\]
the map
\[
V:\mathcal{K}_1\to\bigoplus_{j=1}^nL^2\left(g(\bar{I}_j),dE\right),\ \ \ \ \ f\mapsto \{V_jf\}_{j=1}^n
\]
is a unitary operator, which satisfies
\[
V\bar{T}V^{-1} = \bigoplus_{j=1}^nP_j,\ \ \ \ \ V\lambda(k)V^{-1} = \bigoplus_{j=1}^n e^{iE}.
\]
The details are similar to the proof of Theorem \ref{deficiency indices sa} (1) and we omit them.

(3) This follows from (1) and (2).
\end{proof}

By applying Theorem \ref{deficiency indices unitary} and Corollary \ref{corspec}
to the one-dimensional homogeneous quantum walk, we get the following theorem.

\begin{theorem}
Let $U$ be the time evolution operator of the one-dimensional homogeneous quantum walk,
and let $T$ be the strong time operator of $U$ defined in \eqref{Time op. DTQW}.
\begin{itemize}
\item[(1)] If $0 <|a| < 1$, then $d_{\pm}(T)=4$ and $\sigma(T)=\mathbb{C}$.	
In particular, $T$ has self-adjoint extensions.
All of them are time operators of $U$, but none of them are strong time operators of $U$.					
\item[(2)] If $|a|=1$, then $T$ is self-adjoint and $\sigma(T)=\mathbb{Z}$.
In particular, $T$ is a self-adjoint strong time operator of $U$.
\end{itemize}
\end{theorem}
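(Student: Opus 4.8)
The plan is to reduce the entire statement to the two scalar operators $\hat T_1,\hat T_2$ on $\mathcal{K}_1$, whose deficiency indices, spectra, and self-adjointness have essentially already been settled in Theorem \ref{deficiency indices unitary} and Corollary \ref{corspec}. First I would observe that, by the definition \eqref{Time op. DTQW}, $T$ is the conjugate of $\hat T_1\oplus\hat T_2$ on $\mathcal{K}_1\oplus\mathcal{K}_1$ by the unitary operator $\mathscr{F}^{-1}W$ (note $\mathscr{F}$ is unitary and $W$ is multiplication by the unitary matrix $W(k)$). Since unitary conjugation preserves symmetry, self-adjointness, deficiency indices, and spectra, all assertions about $T$ reduce to the corresponding assertions about the direct sum, using the elementary identities $d_\pm(A\oplus B)=d_\pm(A)+d_\pm(B)$ and $\sigma(A\oplus B)=\sigma(A)\cup\sigma(B)$.

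For part (1), where $0<|a|<1$, I would invoke Lemma \ref{lemma wind} (2) to get $m_1=m_2=0$, so that by Rolle's theorem $\lambda_j'$ must have zeros; locating them is the one genuinely computational step. Using $-i\lambda_j'/\lambda_j=\theta_j'$ together with $\theta_1'(k)=\frac{d}{dk}\arccos\tau(k)=-\tau'(k)/\sqrt{1-\tau(k)^2}$, and noting that the denominator never vanishes because $|\tau(k)|\le|a|<1$, the zeros of $\lambda_j'$ in $[0,2\pi)$ coincide with those of $\tau'(k)=-|a|\sin(k+\alpha-\delta/2)$, of which there are exactly two. Theorem \ref{deficiency indices unitary} (3) then gives $d_\pm(\hat T_1)=2$, and since $\hat T_2=-\hat T_1$ by Theorem \ref{TimeopofQW} (1) one has $d_\pm(\hat T_2)=d_\mp(\hat T_1)=2$, so $d_\pm(T)=4$. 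The identity $\sigma(T)=\mathbb{C}$ follows from Corollary \ref{corspec} (2) applied to each summand, and $d_+=d_-$ guarantees self-adjoint extensions exist.

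It remains to treat the ``time operator but not strong'' dichotomy in (1) and the whole of (2). From the operator equality $U^*TU=T+1$ one reads off $U\,D(T)=D(T)$, so for any self-adjoint extension $\tilde T\supset T$ and any $\psi\in D(T)$ one has $U\psi\in D(T)\subset D(\tilde T)$ and $[\tilde T,U]\psi=\tilde T U\psi-U\tilde T\psi=TU\psi-UT\psi=U\psi$; hence each $\tilde T$ is a time operator of $U$ with $\mathcal{D}=D(T)$. That none is a strong time operator follows by contradiction: a self-adjoint strong time operator would force $\sigma(U)=\mathbb{T}$ by Proposition \ref{prS}, contradicting Remark \ref{conclusion from lemma} (2). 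For part (2), where $|a|=1$, Theorem \ref{TimeopofQW} (2) gives $\hat T_1=P$ and $\hat T_2=-P$, both self-adjoint, so $T$ is self-adjoint; its spectrum is $\sigma(P)\cup\sigma(-P)=\mathbb{Z}$ (equivalently Corollary \ref{corspec} (1) with $m_1=1$, $m_2=-1$), and $T$ is a strong time operator by its construction. I expect the main obstacle to be the careful bookkeeping in the unitary reduction to $\hat T_1\oplus\hat T_2$ and the verification that the self-adjoint extensions remain time operators, rather than any single hard estimate.
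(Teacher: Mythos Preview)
Your proposal is correct and follows essentially the same approach as the paper: reduce $T$ via the unitary $\mathscr{F}^{-1}W$ to $\hat T_1\oplus\hat T_2$, count the zeros of $\lambda_j'$ via those of $\tau'$, and invoke Theorem~\ref{deficiency indices unitary}, Corollary~\ref{corspec}, and Remark~\ref{conclusion from lemma}. Your write-up is in fact slightly more complete than the paper's, since you explicitly verify that every self-adjoint extension of $T$ is a time operator of $U$ (using $U\,D(T)=D(T)$), a point the paper asserts in the theorem statement but does not spell out in its proof.
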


\begin{proof}
(1) Since
\[
|\lambda_j^\prime(k)| = \left|\frac{\lambda_j^\prime(k)}{\lambda_j(k)}\right|
= \frac{|\tau^\prime(k)|}{\sqrt{1-\tau{(k)}^2}},\ \ \ \ \ k\in\mathbb{R},
\]
the number of zeros of $\lambda_j^\prime$ in $[0,2\pi)$ coincides with 
the number of zeros of $\tau$ in $[0,2\pi)$.
The latter is equal to two.
By Theorem \ref{deficiency indices unitary} (3), 
we have $d_{\pm}(\hat{T}_j)=2$ for each $j=1,2$, and hence $d_{\pm}(T)=4$ holds.
Corollary \ref{corspec} (2) implies that $\sigma(T)=\mathbb{C}$. 
It follows from Remark \ref{conclusion from lemma} that no self-adjoint extensions of $T$ are strong time operators of $U$.

(2) This follows from Theorem \ref{TimeopofQW} (2) and the fact that $\sigma(P) = \mathbb{Z}$.
\end{proof}

\begin{example}\label{hadamard}
We consider the Hadamard walk, which is the case $\displaystyle C=\frac{1}{\sqrt{2}}\begin{pmatrix}
1&1\\
1&-1
\end{pmatrix}$.
Then we see that
\[
\lambda_1(k) = \frac{\sqrt{1+\cos^2{k}}+i\sin{k}}{\sqrt{2}},\ \ \ \ \ \lambda_2(k) = \frac{-\sqrt{1+\cos^2{k}}+i\sin{k}}{\sqrt{2}},
\]
whence
\[
\hat{T}_1 = \frac{1}{2}\left(\frac{\sqrt{1+\cos^2{k}}}{\cos{k}}P+P\frac{\sqrt{1+\cos^2{k}}}{\cos{k}}\right),\ \ \ \ \ \hat{T}_2=-\hat{T}_1.
\]
By Theorem \ref{deficiency indices unitary} (3), we get $d_{\pm}(\hat{T}_1)=d_{\pm}(\hat{T}_2)=2$, and hence $d_{\pm}(T)=4$.
Note that, in the present case, we have $m_1=m_2=0$, $\theta_1(k)=\arcsin{(\sin{k}/\sqrt{2})}$ and $\theta_2(k)=-\theta_1(k)+\pi$.
Thus, there exists a unitary operator 
\begin{align*}
V:\mathcal{H}=\ell^2(\mathbb{Z};\mathbb{C}^2)\to L^2&\left(\left[-\frac{\pi}{4},\frac{\pi}{4}\right],dE\right)\oplus L^2\left(\left[-\frac{\pi}{4},\frac{\pi}{4}\right],dE\right)\\
&\oplus L^2\left(\left[\frac{3\pi}{4},\frac{5\pi}{4}\right],dE\right) \oplus L^2\left(\left[\frac{3\pi}{4},\frac{5\pi}{4}\right],dE\right)
\end{align*}
such that
\[
V\bar{T}V^{-1} = P_{[-\pi/4,\pi/4]} \oplus P_{[-\pi/4,\pi/4]} \oplus P_{[3\pi/4,5\pi/4]}\oplus P_{[3\pi/4,5\pi/4]}
\]
and
\[
VUV^{-1} = e^{iE}\oplus e^{iE}\oplus e^{iE}\oplus e^{iE}.
\]
We remark that all self-adjoint extensions of $T$ are time operators of $U$, but none of them are strong time operators of $U$.
The latter follows from the fact that
\[
\sigma(U)=\left\{e^{iE}\mid E\in[-\pi/4,\pi/4]\cup[3\pi/4,5\pi/4]\right\}
\]
and Proposition \ref{prS}.
\end{example}

\subsection{Three-step quantum walks}

Asb\'oth and Obuse \cite{Multi1} introduced multi-step quantum walks (see also \cite{Multi2}) 
and calculated their winding numbers that differ from the winding numbers defined in this paper.
Here we consider a there-step quantum walk,
whose time evolution operator $U_3$ 
on $\mathcal{H}=\ell^2(\mathbb{Z};\mathbb{C}^2)$ is defined by
\[
U_3:= SC_1 S C_2 S C_3,
\]
where $S$ is the shift operator defined in Subsection \ref{construction discrete} and the coin operators are
\[ C_1 :=\begin{pmatrix}  1 & 0 \\ 0 & 1 \end{pmatrix}, 
\quad C_2 := \begin{pmatrix}  b & a \\ -a & b\end{pmatrix},
\quad C_3 := \begin{pmatrix}  b & -a \\ a & b\end{pmatrix}
\]
with $a,b \in \mathbb{R}$ and $a^2+b^2=1$.
In this subsection, we first construct a strong time operator of $U_3$ by a similar argument to Subsection \ref{construction discrete}.
We next compute its deficiency indices and spectrum as an application of Theorem \ref{deficiency indices unitary} and Corollary \ref{corspec}.
It turns out that $U_3$ has a self-adjoint strong time operator under a suitable condition.
This is because that a three-step quantum  walk has a non-zero winding number, thus Theorem \ref{deficiency indices unitary} (1) could be applied.

Let $\mathscr{F}$ be the Fourier transform defined in Subsection \ref{construction discrete}.
Then $\mathscr{F}U_3\mathscr{F}^{-1}$ is the multiplication operator on $\mathcal{K}$ by
\[
\hat{U}_3(k):=
\begin{pmatrix}
a^2e^{ik}+b^2e^{i3k} & abe^{ik}-abe^{i3k}\\
-abe^{-ik}+abe^{-i3k} & a^2e^{-ik}+b^2e^{-i3k}
\end{pmatrix},\ \ \ \ \ k\in [0,2\pi].
\]
The case $b^2=1$ is trivial.
Indeed, since $\hat{U}_3(k)=e^{i3k}\oplus e^{-i3k}$, 
Theorem \ref{lemma unitary} implies that $\hat{T}:=P/3\oplus(-P/3)$ is a strong time operator of $\hat{U}_3(k)$, 
and hence $T:=\mathscr{F}^{-1}\hat{T}\mathscr{F}$ is a strong time operator of $U_3$.
By Theorem \ref{deficiency indices unitary} (1), there exists a unitary operator $V:\mathcal{H}\to\oplus^6\mathcal{K}_1$ such that
\[
VTV^{-1}=\bigoplus_{j=0}^{2}\bigoplus^2\left(P+\frac{j}{3}\right), \ \ \ \ \ VU_3V^{-1}=\bigoplus^6e^{ik},
\]
where we have used the fact that the winding number of $e^{\pm i3k}$ is three.
Note that the above $T$ is self-adjoint and $\sigma(T)=\{s/3 \mid s\in\mathbb{Z}\}$.

In what follows, we concentrate on the case $0\leq b^2<1$.
For each $k\in\mathbb{R}$, the unitary matrix $\hat{U}_3(k)$ has exactly two eigenvalues $\lambda_1(k), \lambda_2(k)$.
They are given by
\[
\lambda_1(k) = \cos{k}(1-4b^2\sin^2{k})+i\sin{k}\sqrt{1+(8b^2-16b^4)\cos^2{k}+16b^4\cos^4{k}}
\]
and $\lambda_2(k) = \overline{\lambda_1(k)}$.
Since the inside of the square root is strictly positive, the map $\mathbb{R}\ni k\mapsto \lambda_j(k)\in \mathbb{C}$
is an analytic periodic function with period $2\pi$ for each $j=1,2$.
In particular, each $\lambda'_j$ has at most finitely many zeros in $[0,2\pi)$.
Let $v_1(k), v_2(k)$ be the corresponding normalized mutually orthogonal eigenvectors.
By a direct computation, we may assume that the map
$\mathbb{R}\ni k\mapsto v_j(k)\in \mathbb{C}^2$
is a Borel periodic function with period $2\pi$ for each $j=1,2$.
Define a $2\times 2$ unitary matrix $W(k)$ by $W(k):=\bigl(v_1(k),v_2(k)\bigr)$.
Then we obtain
\[
W(k)^{-1}\hat{U}_3(k)W(k) = 
\begin{pmatrix}
\lambda_1(k)&0\\
0&\lambda_2(k)
\end{pmatrix},\ \ \ \ \ k\in [0,2\pi].
\]
By Theorem \ref{lemma unitary}, we get a strong time operator
\[
\hat{T}_j:=\frac{i}{2}\left(\frac{\lambda_j(k)}{\lambda'_j(k)}P+P\frac{\lambda_j(k)}{\lambda'_j(k)}\right)
\]
of the multiplication operator $\lambda_j(k)$ for each $j=1,2$.
We use $W$ to denote the multiplication operator on $\mathcal{K}$ by $W(k)$ and set
\[
T :=\mathscr{F}^{-1}W
\begin{pmatrix}
\hat{T}_1&0\\
0&\hat{T}_2
\end{pmatrix}
W^{-1}\mathscr{F}.
\]
Then $T$ is a strong time operator of $U_3$.

Following Theorem \ref{deficiency indices unitary}, we next compute the number of zeros of $\lambda'_j$ in $[0,2\pi)$ for each $j=1,2$.
Note that we concentrate on the case $0\leq b^2<1$.
Put
\[
\alpha(k):=\cos{k}(1-4b^2\sin^2{k}),\ \ \ \ \ \beta(k):=1+(8b^2-16b^4)\cos^2{k}+16b^4\cos^4{k}
\]
for any $k\in\mathbb{R}$.
These functions satisfy $\beta(k)\sin^2{k}=1-\alpha(k)^2$ for every $k\in\mathbb{R}$.
In particular, $1-\alpha(k)^2=0$ if and only if $k\in\pi\mathbb{Z}$ because $\beta$ is strictly positive.
Let
\[
\epsilon(k):={\rm sgn}(\sin{k}),\ \ \ \ \
k\not\in\pi \mathbb{Z}.
\]
If $k\not\in\pi \mathbb{Z}$, we can write $\lambda_1(k)$ as
\[
\lambda_1(k)=\alpha(k)+i\sin{k}\sqrt{\beta(k)} = \alpha(k)+i\epsilon(k)\sqrt{1-\alpha(k)^2},
\]
and thus
\[
\frac{\lambda'_1(k)}{\lambda_1(k)} 
= \frac{\alpha'(k)\left\{1-i\epsilon(k)\frac{\alpha(k)}{\sqrt{1-\alpha(k)^2}}\right\}}{\alpha(k)+i\epsilon(k)\sqrt{1-\alpha(k)^2}} 
\times \frac{i\epsilon(k)\sqrt{1-\alpha(k)^2}}{i\epsilon(k)\sqrt{1-\alpha(k)^2}}
= \frac{\alpha'(k)}{i\epsilon(k)\sqrt{1-\alpha(k)^2}}.
\]
By taking absolute values of both sides, we obtain 
\[
|\lambda'_1(k)| = \frac{|\alpha'(k)|}{\sqrt{1-\alpha(k)^2}} = \frac{|1+8b^2-12b^2\sin^2{k}|}{\sqrt{\beta(k)}},\ \ \ \ \ k\in\mathbb{R}.
\]
We conclude that
\[
\textrm{the number of zeros of $\lambda'_1$ in $[0,2\pi)$} = \begin{cases} 0,\ \ \ \ \ &0\leq b^2<1/4,\\
2, &b^2=1/4,\\
4, &1/4<b^2<1.
\end{cases}
\]
Since $\lambda_2(k) = \overline{\lambda_1(k)}$ for any $k\in\mathbb{R}$, 
the number of zeros of $\lambda'_2$ in $[0,2\pi)$ is equal to the number of zeros of $\lambda'_1$ in $[0,2\pi)$.

Before applying Theorem \ref{deficiency indices unitary}, let us compute the winding number of $\lambda_j$ for each $j=1,2$.
We still concentrate on the case $0\leq b^2<1$.
By the above computation, we get
\begin{align*}
\textrm{the winding number of $\lambda_1$} &= \frac{1}{2\pi i}\int_0^{2\pi}\frac{\lambda'_1(k)}{\lambda_1(k)}\,dk\\
&= \frac{1}{2\pi i}\int_0^{\pi}\frac{\lambda'_1(k)}{\lambda_1(k)}\,dk+\frac{1}{2\pi i}\int_{\pi}^{2\pi}\frac{\lambda'_1(k)}{\lambda_1(k)}\,dk\\
&= -\frac{1}{2\pi}\int_0^{\pi}\frac{\alpha'(k)}{\sqrt{1-\alpha(k)^2}}\,dk+\frac{1}{2\pi}\int_{\pi}^{2\pi}\frac{\alpha'(k)}{\sqrt{1-\alpha(k)^2}}\,dk\\
&= \frac{1}{2\pi}\int_0^{\pi}\frac{d}{dk}\arccos \alpha(k)\,dk-\frac{1}{2\pi}\int_\pi^{2\pi}\frac{d}{dk}\arccos \alpha(k)\,dk\\
&= 1.
\end{align*}
Since $\lambda_2(k) = \overline{\lambda_1(k)}$, it follows that the winding number of $\lambda_2$ is minus one.

Summing up the above arguments, we obtain the following theorem.

\begin{theorem}\label{final}
Let $0\leq b^2<1$.
For the strong time operator $T$ of $U_3$ defined above, the following statements hold.
\begin{itemize}{}{}
\item[(1)] If $0\leq b^2<1/4$, then $T$ is self-adjoint and $\sigma(T)=\mathbb{Z}$. 
Moreover, there exists a unitary operator $V:\mathcal{H}\to\oplus^2\mathcal{K}_1$ such that
\[
VTV^{-1}=\bigoplus^2P,\ \ \ \ \ VU_3V^{-1}=\bigoplus^2e^{ik}.
\]
In particular, $T$ is a self-adjoint strong time operator of $U_3$.
\item[(2)] If $b^2=1/4$, then $d_{\pm}(T)=4$ and $\sigma(T)=\mathbb{C}$.
\item[(3)] If $1/4<b^2<1$, then $d_{\pm}(T)=8$ and $\sigma(T)=\mathbb{C}$.
\end{itemize}
In the case of (2) or (3), $T$ has self-adjoint extensions.
All of them are time operators of $U_3$.
\end{theorem}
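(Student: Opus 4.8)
The plan is to reduce everything to the one-variable structure theorem Theorem \ref{deficiency indices unitary} together with its spectral Corollary \ref{corspec}, applied separately to the two band functions $\lambda_1$ and $\lambda_2$. First I would observe that $T=\mathscr{F}^{-1}W(\hat{T}_1\oplus\hat{T}_2)W^{-1}\mathscr{F}$ is unitarily equivalent to the orthogonal direct sum $\hat{T}_1\oplus\hat{T}_2$, since $\mathscr{F}$ and $W$ are unitary. Consequently the deficiency indices and the spectrum of $T$ are obtained from those of $\hat{T}_1$ and $\hat{T}_2$ through the standard identities $d_\pm(\hat{T}_1\oplus\hat{T}_2)=d_\pm(\hat{T}_1)+d_\pm(\hat{T}_2)$ and $\sigma(\hat{T}_1\oplus\hat{T}_2)=\sigma(\hat{T}_1)\cup\sigma(\hat{T}_2)$, which hold because $\ker((\hat{T}_1\oplus\hat{T}_2)^*-i)=\ker(\hat{T}_1^*-i)\oplus\ker(\hat{T}_2^*-i)$.

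Next I would feed in the two pieces of data already established immediately before the theorem: the number of zeros of $\lambda_j'$ in $[0,2\pi)$ equals $0$, $2$, $4$ according as $0\le b^2<1/4$, $b^2=1/4$, $1/4<b^2<1$; and the winding numbers are $m_1=1$, $m_2=-1$. For case (1) I would invoke Theorem \ref{deficiency indices unitary} (1): since $\lambda_j'$ has no zeros and $|m_j|=1$, each $\hat{T}_j$ is self-adjoint and the direct sum there reduces to a single summand, so $\hat{T}_j$ is unitarily equivalent to $P$ with $\lambda_j(k)$ carried to $e^{ik}$. Composing the two resulting unitaries with $W^{-1}\mathscr{F}$ yields the desired $V:\mathcal{H}\to\oplus^2\mathcal{K}_1$ with $VTV^{-1}=\oplus^2 P$ and $VU_3V^{-1}=\oplus^2 e^{ik}$; self-adjointness of $T$ and $\sigma(T)=\sigma(P)=\mathbb{Z}$ then follow, the latter also from Corollary \ref{corspec} (1) since $\{s/m_j\mid s\in\mathbb{Z}\}=\mathbb{Z}$ for $m_j=\pm1$. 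For cases (2) and (3) I would instead apply Theorem \ref{deficiency indices unitary} (3), which gives $d_\pm(\hat{T}_j)$ equal to the number of zeros of $\lambda_j'$, namely $2$ and $4$ respectively; adding over $j=1,2$ gives $d_\pm(T)=4$ and $d_\pm(T)=8$. In both of these cases $\lambda_j'$ has a zero, so Corollary \ref{corspec} (2) gives $\sigma(\hat{T}_j)=\mathbb{C}$ and hence $\sigma(T)=\mathbb{C}$.

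Finally, for the closing assertions in cases (2) and (3): since the deficiency indices are finite and equal, von Neumann's theory guarantees that $T$ admits self-adjoint extensions. To see that every such extension $\tilde{T}\supset T$ is a time operator of $U_3$, I would take $\mathcal{D}=D(T)$; because $T$ is a strong time operator, Remark \ref{equT} (2) shows that $U_3$ maps $D(T)$ into $D(T)$ and that $[T,U_3]=U_3$ holds on $D(T)$, and since $\tilde{T}$ agrees with $T$ on $D(T)$ the same commutation relation holds for $\tilde{T}$ on $\mathcal{D}$, which verifies the definition.

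The main obstacle here is bookkeeping rather than a genuine analytic difficulty: the real content, namely the zero-counting of $\lambda_j'$, the winding-number computation, and the structure theorem itself, has already been carried out, so the only care required is to track how the unitary equivalences for the two bands assemble into a single $V$ on $\mathcal{H}$ and to confirm additivity of deficiency indices under direct sums. The most delicate point is the degeneration of the $|m_j|$-fold decomposition to a single copy of $P$ in case (1), where one must note that $|m_1|=|m_2|=1$ makes the relevant index set a singleton.
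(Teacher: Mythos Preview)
Your proposal is correct and follows exactly the route the paper intends: the paper itself simply writes ``Summing up the above arguments, we obtain the following theorem,'' and your write-up spells out precisely that summation, namely applying Theorem \ref{deficiency indices unitary} and Corollary \ref{corspec} to each band $\lambda_j$ using the already-computed zero counts and winding numbers, then assembling the two pieces via the unitary $W^{-1}\mathscr{F}$. Your handling of the final clause (self-adjoint extensions exist by equality of deficiency indices, and each is a time operator via Remark \ref{equT} (2) on $\mathcal{D}=D(T)$) is also the intended argument.
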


\begin{remark}
In the case of (2) or (3) in Theorem \ref{final}, it is natural to ask whether 
self-adjoint extensions of $T$ are strong time operators of $U_3$.
This problem is left for future study.
\end{remark}

\section*{Acknowledgement}
This work was supported by JSPS KAKENHI (Grant Number JP18K03327, JP16K17612 and 26800055), 
and by the Research Institute for Mathematical Sciences, a Joint Usage/ Research Center located in Kyoto University.


\begin{thebibliography}{99} 		
\bibitem{AB61}  Y. Aharonov and D. Bohm, Time in the quantum theory and the uncertainty relation for time and energy, {\it Phys. Rev.} {\bf 122} (1961), 1649--1658.

\bibitem{Multi1}
J. K. Asb\'oth and H. Obuse,
Bulk-boundary correspondence for chiral symmetric quantum walks,
{\it Phys. Rev. B} {\bf 88} (2013), 121406(R).  

\bibitem{Arnault}
P. Arnault and F. Debbasch,
Landau levels for discrete-time quantum walks in artificial magnetic fields,
{\it Physica A} {\bf 443} (2016), 179--191. 

\bibitem{Amb}
A. Ambainis, 
Quantum walk algorithm for element distinctness, 
{\it SIAM J. Comput.} {\bf 37} (2007), 210. 

\bibitem{Ar05} A. Arai, Generalized Weak Weyl Relation and Decay of Quantum Dynamics,
{\it Rev. Math. Phys.} {\bf 17} (2005), 1071--1109.

\bibitem{Ar07} A. Arai, Spectrum of Time Operators, {\it Lett. Math, Phys.} {\bf 80} (2007), 211--221. 

\bibitem{Ar08} A. Arai, On the uniqueness of the canonical commutation relations, 
{\it Lett. Math. Phys.} {\bf 85} (2008), 15--25. Erratum: {\it Lett. Math. Phys.} {\bf 89} (2009), 287.

\bibitem{Ar08b} A. Arai, Some aspects of time operators, in {\it Quantum Bio-Informatics}, 26--35. World Scientific, Singapore (2008).

\bibitem{Ar08c} A. Arai, Mathematical theory of time operators in quantum physics, {\it RIMS K\^oky\^uroku} {\bf 1609} (2008), 24--35.

\bibitem{Ar09} A. Arai, Necessary and sufficient conditions for a Hamiltonian with discrete eigenvalues to have time operators,
{\it Lett. Math. Phys.} {\bf 87} (2009), 67--80.

\bibitem{Ar10} A. Arai, Strong time operators in algebraic quantum mechanics and quantum field theory, 
{\it RIMS K\^oky\^uroku Bessatsu} {\bf B16} (2010), 1--13.

\bibitem{AH} A. Arai and F. Hiroshima, Ultra-weak time operators of Schr\"odinger operators, {\it Ann. Henri Poincar\'e} {\bf 18} (2017), 2995--3033.
 	

\bibitem{AM08a} A. Arai and Y. Matsuzawa, Construction of a Weyl representation from a weak Weyl representation of the canonical commutation relation, 
{\it Lett. Math, Phys.} {\bf 83} (2008), 201--211. 		

\bibitem{AM} A. Arai and Y. Matsuzawa, Time Operators of a Hamiltonian with Purely Discrete Spectrum,
{\it Rev. Math. Phys.} {\bf 20} (2008), 951--978.
 		
 		
\bibitem{BF} R. Brunetti and K. Fredenhagen, Time of occurrence observable in quantum mechanics, {\it Phys. Rev. A} {\bf 66} (2002), 044101.

\bibitem{BFH} R. Brunetti, K. Fredenhagen and M. Hoge, Time in quantum physics: from an external parameter to an intrinsic observable,
{\it Found. Phys.} {\bf 40} (2010), 1368--1378.

\bibitem{Childs1}
A. M. Childs, 
Universal Computation by Quantum Walk,
{\it Phys. Rev. Lett.} {\bf 102} (2009), 180501. 

\bibitem{Childs2}
A. M. Childs, D. Gosset and Z. Webb,
Universal Computation by Multiparticle Quantum Walk,
{\it Science} {\bf 339} (2013), 791--794.

\bibitem{DD} G. Dorfmeister and J. Dorfmeister,
Classification of certain pairs of operators $(P,Q)$ satisfying $[P,Q]=-i\,{\rm Id}$,
{\it J. Funct. Anal.} {\bf 57} (1984), 301--328. 

\bibitem{EM99}
I. L. Egusquiza and J. G. Muga,
Free-motion time-of-arrival operator and probability distribution,
{\it Phys. Rev. A} {\bf 61} (1999), 012104.

\bibitem{Fillman}
J. Fillman and D. C. Ong, 
Purely singular continuous spectrum for limit-periodic cmv operators with applications to quantum walks, 
{\it J. Funct. Anal.} {\bf 272} (2017), 5107--5143.   

\bibitem{Galapon}
E. A. Galapon, Self-adjoint time operator is the rule for discrete semi-bounded Hamiltonians,
{\it Proc. R. Soc. Lond. A} {\bf 458} (2002), 2671--2689.

\bibitem{HKSS}
Y. Higuchi, N. Konno, I. Sato and E. Segawa,
Spectral and asymptotic properties of Grover walks on crystal lattices,
{\it J. Funct. Anal.} {\bf 267} (2014), 4197--4235.


\bibitem{HKM09} F. Hiroshima, S. Kuribayashi and Y. Matsuzawa, Strong time operators of generalized Hamiltonians, 
{\it Lett. Math. Phys.} {\bf 87} (2009), 115--123.

\bibitem{HKS} F. Hiroshima, S. Kuribayashi and I. Sasaki, Self-adjoint extensions of momentum operators: application of weak Weyl relations, 
{\it J. Math-for-Industry} {\bf 2} (2010), 21--25.

\bibitem{JM} P. T. J{\o}rgensen and P. S. Muhly, Selfadjoint extensions satisfying the Weyl operator commutation relations,
{\it J. Analyse Math.} {\bf 37} (1980), 46--99. 

\bibitem{Konno1}
N. Konno,
Quantum random walks in one dimension. {\it Quantum Inf. Process.} {\bf 1} (2002), 345--354.

\bibitem{Konno2}
N. Konno,
A new type of limit theorems for the one-dimensional quantum random walk
{\it J. Math. Soc. Japan}  {\bf 57} (2005), 1179--1195.

\bibitem{Lovett}
N. B. Lovett, S. Cooper, M. Everitt, M. Trevers and Viv Kendon,   
Universal quantum computation using the discrete-time quantum walk, {\it Phys. Rev. A} {\bf 81} (2010), 042330. 

\bibitem{MS4}
M. Maeda, H. Sasaki, E. Segawa, A. Suzuki and K. Suzuki,  
Dispersive estimates for quantum walks on 1D lattice,
arXiv:1808.05714.

\bibitem{MS}
M. Maeda and A. Suzuki, 
Continuous limits of linear and nonlinear quantum walks,
arXiv:1902.02017.

\bibitem{Molf}
G. Di Molfetta, M. Brachet and F. Debbasch,  
Quantum walks in artificial electric and gravitational fields,
{\it Physica A} {\bf 397} (2014), 157--168.

\bibitem{Mi01} M. Miyamoto, A generalized Weyl relation approach to the time operator and its connection to the survival probability, 
{\it J. Math. Phys.} {\bf 42} (2001), 1038--1052.

\bibitem{Mu} G. Murphy, $C^*$-algebras and operator theory. Academic Press (1990).

\bibitem{Pa80} W. Pauli, General Principles of Quantum Mechanics. Springer (1980).

\bibitem{RT} S. Richard and R. Tiedra de Aldecoa, On a new formula relating localisation operators to time operators, 
in {\it Spectral analysis of quantum Hamiltonians}, 301--338, volume 224 of Operator Theory: Advances and Applications. Birkh\"auser, Basel (2012).

\bibitem{ST} D. Sambou and R. Tiedra de Aldecoa, Quantum time delay for unitary operators: general theory,
{\it Rev. Math. Phys.} (in press), arXiv:1812.10718.

\bibitem{Sc83} K. Schm\"udgen, On the Heisenberg commutation relation. I,  
{\it J. Funct. Anal.} {\bf 50} (1983), 8--49.

\bibitem{Sc12} K. Schm\"udgen,  Unbounded self-adjoint operators on Hilbert space, volume 265 of Graduate Texts
in Mathematics. Springer (2012).

\bibitem{SS}
E. Segawa, A. Suzuki, Generator of an abstract quantum walk. {\it Quantum Stud: Math. Found.} {\bf 3} (2016), 11.

\bibitem{TateSunada}
T. Sunada and T. Tate,  
Asymptotic behavior of quantum walks on the line,
{\it J. Funct. Anal.} {\bf 262} (2012), 2608--2645.

\bibitem{Su16} A. Suzuki, Asymptotic velocity of a position-dependent quantum walk, 
{\it Quantum Inf. Process.} {\bf 15} (2016), 103--119.

\bibitem{Te} N. Teranishi, A note on time operators, {\it Lett. Math. Phys.} {\bf 106} (2016), 1259--1263.

\bibitem{We90} R. F. Werner, Dilations of symmetric operators shifted by a unitary group, {\it J. Funct. Anal.} {\bf 92} (1990), 166-176.
 		
\bibitem{Multi2}
L. Xiao, X. Qiu, K. Wang, Z. Bian, X. Zhan, H. Obuse, B. C. Sanders, W. Yi and P. Xue,
Higher winding number in a nonunitary photonic quantum walk,
{\it Phys. Rev. A} {\bf 98} (2018), 063847. 	
 	\end{thebibliography}
\end{document}